\documentclass{biometrika}

\usepackage{times}
\usepackage{bm}
\usepackage{natbib}

\usepackage{amsmath}
\usepackage{amssymb}
\usepackage{graphicx,subfigure}
\usepackage{pst-plot, pst-node, pst-text}
\usepackage{algorithm}
\usepackage{algorithmic}

\newcommand{\PP}{\mbox{pr}}
\newcommand{\EE}{E}
\newcommand{\R}{\mathbb{R}}

\newcommand{\Nat}{\mathbb{N}}
\newcommand{\eps}{\varepsilon}

\newcommand{\parcov}{\mbox{parcov}}
\newcommand{\Cov}{\mbox{cov}}
\newcommand{\Cor}{\mbox{cor}}
\newcommand{\Var}{\mbox{var}}
\newcommand{\peff}{\mathrm{peff}}
\newcommand{\reach}{\mathrm{reach}}

\begin{document}



\markboth{P. B\"uhlmann, M. Kalisch and M. H. Maathuis}{Partial faithfulness and the PC-simple algorithm}

\title{Variable selection in high-dimensional linear models:\\
partially faithful distributions and the PC-simple algorithm}

\author{P. B\"UHLMANN, M. KALISCH \and M. H. MAATHUIS}
\affil{Seminar f\"ur Statistik, Department of Mathematics, ETH Zurich, R\"amistrasse 101, 8092 Zurich, Switzerland \email{buhlmann@stat.math.ethz.ch} \email{kalisch@stat.math.ethz.ch} \email{maathuis@stat.math.ethz.ch}}

\maketitle

\begin{abstract}
  We consider variable selection in high-dimensional linear models where
  the number of covariates greatly exceeds the sample size. We introduce
  the new concept of partial faithfulness and use it to infer associations
  between the covariates and the response. Under partial faithfulness, we
  develop a simplified version of the PC algorithm \citep{sgs00}, the
  PC-simple algorithm, which is computationally feasible even with
  thousands of covariates and provides consistent variable selection under
  conditions on the random design matrix that are of a different nature
  than coherence conditions for penalty-based approaches like the Lasso.
   Simulations and application to real data show that
   our method is competitive compared to penalty-based approaches. We
   provide an efficient implementation of the algorithm in the \texttt{R}-package \texttt{pcalg}.
\end{abstract}

\begin{keywords}
   Directed acyclic graph; Elastic net; Graphical modeling; Lasso,
   Regression
\end{keywords}

\section{Introduction}
Variable selection in high-dimensional models has recently
attracted a lot of attention. A particular stream of research
has focused on penalty-based estimators whose computation is
feasible and provably correct
\citep{mebu06,zou06,zhaoyu06,cantao07,geer07,zhanghua07,wainwright09,meyu06,huangetal06,bicketal07,wassroe07,canpan07}. Another
important approach for estimation in high-dimensional
settings, including variable selection, has been developed
within the Bayesian paradigm, see for example
\citet{geormc93,geormc97,brownetal99,brownetal02,nottko05,parkca08}. These
methods rely on Markov chain Monte Carlo techniques which are typically
very expensive for
truly high-dimensional problems.
%

In this paper, we propose a method for variable selection in linear models
which is fundamentally different from penalty-based schemes. Reasons to look at
such an approach include: (i) From a practical perspective, it is valuable
to have a very different method in the tool-kit for high-dimensional data
analysis, raising the confidence for relevance of variables if they are
selected by more than a single method. (ii) From a methodological and
theoretical perspective, we introduce the new framework of partial
faithfulness.  Partial faithfulness is related to, and typically weaker
than, the concept of linear faithfulness used in graphical models, hence
the name partial faithfulness.  We prove that partial faithfulness arises
naturally in the context of linear models if we make a
simple assumption on the structure of the regression coefficients to
exclude adversarial cases, see assumption (C2) and Theorem
\ref{theorem.mod.lin.part.faithful}.

Partial faithfulness can be exploited to construct an efficient
hierarchical testing algorithm, called the PC-simple algorithm, which is a
simplification of the PC algorithm \citep{sgs00} for estimating directed
acyclic graphs. The letters PC stand for the first names of Peter Spirtes
and Clarke Glymour, the inventors of the PC algorithm. We prove consistency of the PC-simple
algorithm for variable selection in high-dimensional partially faithful
linear models under assumptions on the design matrix that are very
different from coherence assumptions for penalty-based methods.
The PC-simple algorithm can also be viewed as a generalization of
correlation screening or sure independence screening \citep{fanlv07}. Thus,
as a special case, we obtain consistency for correlation screening under
different assumptions and reasoning than \citet{fanlv07}.
We illustrate the PC-simple algorithm, using our implementation in the
\texttt{R}-package \texttt{pcalg}, on high-dimensional simulated examples
and a real data set on riboflavin production by the bacterium
Bacillus subtilis.
%

\section{Model and notation}\label{sec.model}

Let $X = (X^{(1)},\dots,X^{(p)}) \in \R^p$ be a vector of covariates with $\EE(X) = \mu_X$ and $\Cov(X) = \Sigma_X$.
Let $\epsilon \in \R$ with $\EE(\epsilon)=0$ and $\Var(\epsilon)=\sigma^2>0$, such that $\epsilon$ is
uncorrelated with $X^{(1)}, \dots, X^{(p)}$. Let $Y\in \R$ be defined by the
following random design linear model:
\begin{eqnarray}\label{mod.lin}
  & &Y = \delta + \sum_{j=1}^p \beta_j X^{(j)} + \epsilon,
\end{eqnarray}
for some parameters $\delta \in \R$ and $\beta=(\beta_1,\ldots ,\beta_p)^T \in
\R^p$. We assume implicitly that
$\EE(Y^2) < \infty$ and $\EE\{(X^{(j)})^2\} < \infty$ for $j = 1,\ldots, p$.

We consider models in which some, or most, of the $\beta_j$s are equal to zero.
Our goal is to identify the active set
\begin{align*}
   {\cal A} = \{j=1,\ldots ,p; \ \beta_j \neq 0\}
\end{align*}
based on a sample of independent observations $(X_1,Y_1), \dots, (X_n,
Y_n)$ which are distributed as $(X,Y)$. We denote the effective dimension
of the model, that is, the number of nonzero $\beta_j$s, by $\peff = | {\cal
  A} |$.  We define the following additional conditions:
\begin{enumerate}
  \item[(C1)] $\Sigma_X$ is strictly positive definite.
\end{enumerate}
\begin{enumerate}
\item[(C2)] The regression coefficients satisfy
  \begin{eqnarray*}
     & &\{\beta_j;\ j \in {\cal A}\} \sim f(b) db,
  \end{eqnarray*}
  where $f(\cdot)$ denotes a density on a subset of $\R^{\peff}$ of an
  absolutely continuous distribution with respect to Lebesgue measure.
\end{enumerate}
Assumption (C1) is a condition on the random design matrix. It is needed
for identifiability of the regression parameters from the joint
distribution of $(X,Y)$, since $\beta = \Sigma_X^{-1}
{\Cov(Y,X^{(1)}),\ldots ,\Cov(Y,X^{(p)})}^T$.  Assumption (C2) says that
the non-zero regression coefficients are realizations from an
absolutely continuous distribution with respect to Lebesgue measure. Once
the $\beta_j$s are realized, we fix them such that they can be considered
as deterministic in the linear model \eqref{mod.lin}. This framework is
loosely related to a Bayesian formulation treating the $\beta_j$s as
independent and identically distributed random variables from a prior
distribution which is a mixture of a point mass at zero for $\beta_j$s
with $j\notin \cal A$ and a density with respect to Lebesgue measure for
$\beta_j$s with $j\in \cal A$.  Assumption (C2) is mild in the following
sense: the zero coefficients can arise in an arbitrary way and only the
non-zero coefficients are restricted to exclude adversarial
cases. Interestingly, \citet{canpan07} also make an assumption on the
regression coefficients using the concept of random sampling in their
generic S-sparse model, but other than that, there are no immediate deeper
connections between their setting and ours.  Theorem
\ref{theorem.mod.lin.part.faithful} shows that assumptions (C1) and (C2)
imply partial faithfulness, and partial faithfulness is used to obtain the
main results in Theorems
\ref{theorem.correctness.pc.simple.pop}, \ref{theorem.cons.pc.simple} and
\ref{theorem.cons.corr.screening}. Assumption (C2), however, is not a
necessary condition for these results.

We use the following notation. For
a set ${\cal S} \subseteq \{1,\dots,p\}$, $|\cal S|$ denotes its
cardinality, ${\cal S}^C$ is its complement in
$\{1,\dots,p\}$, and $X^{(\cal S)} = \{X^{(j)};\ j \in {\cal S}\}$. Moreover, $\rho(Z^{(1)},Z^{(2)} \mid W)$ and
$\parcov(Z^{(1)},Z^{(2)} \mid W)$ denote the
population partial correlation and the population partial covariance
between two variables $Z^{(1)}$ and $Z^{(2)}$ given a collection of
variables $W$.

\section{Linear faithfulness and partial faithfulness}\label{sec.faith.and.part.faith}

\subsection{Partial faithfulness}\label{sec.part.faith}

We now define partial faithfulness, the concept that will allow
us to identify the active set $\cal A$ using a simplified version of the PC
algorithm.

\begin{definition}\label{definition1}
  Let $X\in \R^p$ be a random vector, and let $Y \in \R$
  be a random variable. The distribution of $(X,Y)$ is
  said to be partially faithful if the following holds for every $j\in
  \{1,\dots,p\}$: if $\rho(Y,X^{(j)} \mid X^{(\cal S)})=0$ for some ${\cal
    S} \subseteq \{j\}^C \notag$ then $\rho(Y,X^{(j)} \mid X^{(\{j\}^C)}) =
  0$.
\end{definition}

For the linear model \eqref{mod.lin} with assumption (C1), $\beta_j = 0$ if and only if $\rho(Y,X^{(j)} \mid X^{(\{j\}^C)})=0$. Hence, such a model satisfies the partial faithfulness assumption if for every $j \in \{1,\ldots ,p\}$:
\begin{align}\label{eq: def partial faithfulness for mod.lin}
   \rho(Y,X^{(j)} \mid X^{({\cal S})}) = 0\ \mbox{for some}\ {\cal S}
  \subseteq \{j\}^C \ \mbox{implies} \ \beta_j = 0.
\end{align}

\begin{theorem}\label{theorem.mod.lin.part.faithful}
   Consider the linear model (\ref{mod.lin}) satisfying assumptions
   (C1) and (C2). Then partial faithfulness
   holds almost surely with respect to the distribution
   generating the non-zero regression coefficients.
\end{theorem}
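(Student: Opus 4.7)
My plan is to verify the equivalent linear-model characterisation \eqref{eq: def partial faithfulness for mod.lin} of partial faithfulness. Since the conclusion $\beta_j=0$ is trivial when $j\notin\cal A$, it suffices to show that, almost surely with respect to the joint distribution $f$ of the nonzero regression coefficients, for every $j\in\cal A$ and every $\cal S\subseteq \{j\}^C$,
$$\rho(Y,X^{(j)}\mid X^{(\cal S)})\neq 0.$$
Under assumption (C1) every conditional variance appearing in the denominator of the partial correlation is strictly positive, so it is equivalent to prove the corresponding partial covariance is nonzero.

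The key step is to expand $\parcov(Y,X^{(j)}\mid X^{(\cal S)})$ using \eqref{mod.lin}. Viewing partial covariance as the ordinary covariance of residuals after linear projection onto $X^{(\cal S)}$, and using that $\epsilon$ is uncorrelated with every $X^{(k)}$ (so its projection on $X^{(\cal S)}$ is its mean and its partial covariance with $X^{(j)}$ given $X^{(\cal S)}$ vanishes), together with the vanishing contribution of the constant $\delta$ and bilinearity, I obtain the linear expansion
$$\parcov(Y,X^{(j)}\mid X^{(\cal S)}) \;=\; \sum_{k\in\cal A} \beta_k\, c_k, \qquad c_k := \parcov(X^{(k)}, X^{(j)}\mid X^{(\cal S)}),$$
whose coefficients $c_k$ depend only on $\Sigma_X$, not on the realised $\beta$. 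The decisive observation is that $c_j = \Var(X^{(j)}\mid X^{(\cal S)}) > 0$ by (C1), and that $c_k = 0$ for $k \in \cal S$, so the equation above is a genuinely non-degenerate affine equation in the variable $\beta_j$.

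Consequently, for each fixed pair $(j,\cal S)$ with $j\in\cal A$ and $\cal S\subseteq\{j\}^C$, the event $\{\parcov(Y,X^{(j)}\mid X^{(\cal S)})=0\}$ is determined by an affine hyperplane in the $\peff$-dimensional space parametrising $(\beta_k)_{k\in\cal A}$, and such a hyperplane has Lebesgue measure zero. By (C2) the law of $(\beta_k)_{k\in\cal A}$ is absolutely continuous with respect to Lebesgue measure, so this event has probability zero. A union bound over the finite collection of pairs $(j,\cal S)$ with $j\in\cal A$, of which there are at most $p\cdot 2^{p-1}$, finishes the proof.

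The only delicate point is the linear decomposition: one must confirm both that partial covariance is bilinear in its two arguments and that the noise term drops out because $\epsilon$ is uncorrelated with every component of $X$. Once this is in hand, the remainder of the argument is the standard measure-theoretic observation that a single affine hyperplane, with a provably nonzero coefficient on $\beta_j$, carries no mass under an absolutely continuous distribution.
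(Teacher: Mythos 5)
Your proof is correct and takes essentially the same route as the paper's: both expand $\parcov(Y,X^{(j)}\mid X^{({\cal S})})$ by linearity, drop the $\epsilon$ term because it is uncorrelated with $X$, observe that the coefficient of $\beta_j$ is $\parcov(X^{(j)},X^{(j)}\mid X^{({\cal S})})>0$ by (C1), and conclude that the vanishing event is a Lebesgue-null hyperplane under (C2). Your explicit union bound over the finitely many pairs $(j,{\cal S})$ is a detail the paper leaves implicit, but it is the same argument.
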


A proof is given in the Appendix. This is in the same spirit as a result
by \citet[Th. 3.2]{sgs00} for graphical models, saying that non-faithful
distributions for directed acyclic graphs have Lebesgue measure zero, but
we are considering here the typically weaker notion of partial
faithfulness.  A direct consequence of partial faithfulness is as follows:
\begin{corollary}\label{corollary.all.par.corr.neq.0}
  Consider the linear model (\ref{mod.lin}) satisfying the partial
  faithfulness condition. Then the following holds for every $j\in
  \{1,\dots,p\}$:
  \begin{eqnarray*}
    \rho(Y,X^{(j)} \mid X^{({\cal S})}) \neq 0\ \mbox{for all}\ {\cal S} \subseteq
    \{j\}^C \ \mbox{if and only if}\ \beta_j \neq 0.
  \end{eqnarray*}
\end{corollary}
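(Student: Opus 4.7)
The claim is essentially a logical repackaging of partial faithfulness combined with the standard fact, stated just above, that in the linear model \eqref{mod.lin} under (C1) we have $\beta_j = 0$ if and only if $\rho(Y, X^{(j)} \mid X^{(\{j\}^C)}) = 0$. My plan is to argue the two directions as contrapositives, using partial faithfulness only in one of them.

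For the ``if'' direction, assume $\beta_j \neq 0$ and prove the contrapositive: suppose there exists some $\mathcal{S} \subseteq \{j\}^C$ with $\rho(Y, X^{(j)} \mid X^{(\mathcal{S})}) = 0$. Partial faithfulness, applied to this $\mathcal{S}$, yields $\rho(Y, X^{(j)} \mid X^{(\{j\}^C)}) = 0$, and the equivalence recalled above forces $\beta_j = 0$, a contradiction. Hence no such $\mathcal{S}$ can exist, i.e.\ $\rho(Y, X^{(j)} \mid X^{(\mathcal{S})}) \neq 0$ for every $\mathcal{S} \subseteq \{j\}^C$.

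For the ``only if'' direction I do not need partial faithfulness at all; it is enough to exhibit a single offending $\mathcal{S}$. Taking $\mathcal{S} = \{j\}^C$, the equivalence $\beta_j = 0 \Leftrightarrow \rho(Y, X^{(j)} \mid X^{(\{j\}^C)}) = 0$ immediately contradicts the assumption that all such partial correlations are nonzero, so $\beta_j \neq 0$.

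There is essentially no technical obstacle here; the corollary is a one-line consequence of the definition of partial faithfulness together with the population identifiability statement under (C1). The only point that deserves care is to make explicit that (C1) is being used through the displayed equivalence preceding \eqref{eq: def partial faithfulness for mod.lin}, since partial faithfulness alone is a statement about vanishing partial correlations and does not mention the coefficients $\beta_j$.
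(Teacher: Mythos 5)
Your proof is correct and follows essentially the same route as the paper's: one direction is the contrapositive of partial faithfulness in the form of \eqref{eq: def partial faithfulness for mod.lin}, and the other is the special case ${\cal S} = \{j\}^C$ of the equivalence $\beta_j = 0 \Leftrightarrow \rho(Y,X^{(j)} \mid X^{(\{j\}^C)}) = 0$. Your remark that this equivalence implicitly relies on (C1), even though the corollary's statement only mentions partial faithfulness, is a fair and worthwhile observation.
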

A simple proof is given in the Appendix. Corollary \ref{corollary.all.par.corr.neq.0} shows that,
under partial faithfulness, variables in the active set ${\cal A}$ have a strong
interpretation in the sense that all corresponding partial correlations are
different from zero when
conditioning on any subset ${\cal S} \subseteq \{j\}^C$.

\subsection{Relationship between linear faithfulness and partial faithfulness}\label{sec.relationship.faith.and.part.faith}

To clarify the meaning of partial faithfulness, this section discusses
the relationship between partial faithfulness and the concept of linear
faithfulness used in graphical models. This is the only section
that uses concepts from graphical modeling, and it is not required to
understand the remainder of the paper.

We first recall the definition of linear faithfulness. The distribution of a
collection of random variables $(Z^{(1)},\dots,Z^{(q)})$
can be depicted by a directed acyclic graph $G$ in which each vertex
represents a variable, and the directed edges between the vertices encode
conditional dependence relationships. The distribution of $(Z^{(1)},\dots,Z^{(q)})$
is said to be linearly faithful to $G$ if the
following holds for all $i\neq j\in \{1,\dots,q\}$ and ${\cal S}\subseteq
\{1,\dots,q\} \setminus \{i,j\}$:
\begin{align*}
    Z^{(i)} \ \text{and} \ Z^{(j)}\ \text{are d-separated by}\ Z^{(S)} \
    \text{in} \ G\ \mbox{if and only if}\ \rho(Z^{(i)}, Z^{(j)} \mid Z^{(S)})=0,
\end{align*}
see, e.g., \citet[page 47]{sgs00}. In other words, linear faithfulness to
$G$ means that all and only all zero partial correlations among the
variables can be read off from $G$ using d-separation, a
graphical separation criterion explained in detail in \cite{sgs00}.

Partial faithfulness is related to a weaker version of linear faithfulness. We say
that the distribution of $(X,Y)$, where $X \in \R^p$ is a random vector
 and $Y \in \R$ is a random variable, is
linearly $Y$-faithful to $G$ if the following holds for all $j \in
\{1,\dots,p\}$ and ${\cal S} \subseteq \{j\}^C$:
\begin{align}
   X^{(j)} \ \text{and} \ Y \ \text{are d-separated by}\ X^{(S)} \
   \text{in} \ G\ \mbox{if and only if}\ \rho(X^{(j)},Y \mid X^{(S)})=0.
    \label{eq: def partial faithfulness for DAGs}
\end{align}
Thus, linear $Y$-faithfulness to $G$ means that all and only all zero partial correlations
between $Y$ and the $X^{(j)}$s can be
read off from $G$ using d-separation, but it does not require that all
and only all zero partial correlations among the $X^{(j)}$s
can be read off using d-separation.

We now consider the relationship between linear faithfulness, linear
$Y$-faithfulness and partial faithfulness.  Linear faithfulness and linear
$Y$-faithfulness are graphical concepts, in that they link a distribution
to a directed acyclic graph, while partial faithfulness is not a graphical
concept. From the definition of linear faithfulness and linear
$Y$-faithfulness, it is clear that linear faithfulness implies linear
$Y$-faithfulness. The following theorem relates linear $Y$-faithfulness to
partial faithfulness:
\begin{theorem}\label{theorem: faithfulness and part faithfulness}
  Assume that the distribution of $(X,Y)$ is linearly $Y$-faithful to a directed acyclic graph in which $Y$ is childless. Then partial faithfulness holds.
\end{theorem}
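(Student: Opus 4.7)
Fix $j \in \{1,\ldots,p\}$ and assume the antecedent of partial faithfulness, namely $\rho(Y, X^{(j)} \mid X^{({\cal S})}) = 0$ for some ${\cal S} \subseteq \{j\}^C$. Applying linear $Y$-faithfulness to this conditioning set, $X^{(j)}$ and $Y$ are $d$-separated by $X^{({\cal S})}$ in $G$. Since adjacent vertices are never $d$-separated by any set, this forces $X^{(j)}$ not to be adjacent to $Y$ in $G$; because $Y$ is childless, this simply says $X^{(j)}\notin \mathrm{pa}(Y)$. My plan is to prove, working purely graphically, that $X^{(j)}$ and $Y$ are also $d$-separated by the larger set $X^{(\{j\}^C)}$. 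A final application of linear $Y$-faithfulness in the other direction then yields $\rho(Y, X^{(j)} \mid X^{(\{j\}^C)}) = 0$, which is exactly the conclusion required by Definition \ref{definition1}.

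To establish this $d$-separation I would take an arbitrary path $\pi$ from $X^{(j)}$ to $Y$ in $G$ and exhibit a node on $\pi$ that is blocked by $X^{(\{j\}^C)}$. Since $Y$ is childless, the last edge of $\pi$ is necessarily of the form $Z \to Y$ for some $Z \in \mathrm{pa}(Y)$. The fact that $X^{(j)}\notin \mathrm{pa}(Y)$ gives $Z \neq X^{(j)}$, so $Z \in X^{(\{j\}^C)}$ and $Z$ lies in our conditioning set. Next I verify that $Z$ is a non-collider on $\pi$. Let $W$ be the node immediately preceding $Z$ on $\pi$; such a $W$ exists because $\pi$ cannot consist of a single edge (otherwise $X^{(j)}$ would be a parent of $Y$). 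Then the two edges of $\pi$ incident to $Z$ form either a chain $W \to Z \to Y$ or a fork $W \leftarrow Z \to Y$, so $Z$ is a non-collider. Since $Z$ is a non-collider on $\pi$ and $Z$ is in the conditioning set, $\pi$ is blocked, completing the $d$-separation argument.

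The main subtlety is precisely this step: enlarging a conditioning set can, in general, open previously blocked paths through colliders, so one cannot naively pass from $d$-separation by $X^{({\cal S})}$ to $d$-separation by the much larger set $X^{(\{j\}^C)}$. What rescues the argument is that the childlessness of $Y$ forces every path terminating at $Y$ to end in an edge $Z\to Y$, which makes the penultimate vertex $Z$ automatically a non-collider on that path. Consequently, conditioning on $Z$ always blocks the path, regardless of what other colliders one might inadvertently activate elsewhere by conditioning on the remaining variables in $X^{(\{j\}^C)}$.
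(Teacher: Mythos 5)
Your proof is correct and is essentially the paper's argument in contrapositive-free form: the paper assumes $\rho(Y,X^{(j)}\mid X^{(\{j\}^C)})\neq 0$ and deduces a direct edge $X^{(j)}\to Y$, whereas you assume $\rho(Y,X^{(j)}\mid X^{({\cal S})})=0$ and deduce the absence of that edge, but both hinge on the identical graphical fact that, $Y$ being childless, the penultimate vertex of any path into $Y$ is a parent of $Y$ and hence a non-collider, so the path is blocked by $X^{(\{j\}^C)}$. Your explicit remark that enlarging a conditioning set does not in general preserve d-separation, and why childlessness of $Y$ makes it safe here, is a nice clarification of the step the paper leaves implicit.
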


A proof is given in the Appendix. A distribution is typically linearly $Y$-faithful
to several directed acyclic graphs. Theorem \ref{theorem: faithfulness and part faithfulness} applies if $Y$ is childless
in at least one of these graphs.

\medskip We illustrate Theorem \ref{theorem: faithfulness and part faithfulness}
by three examples. Example 1 shows a distribution where partial faithfulness does not hold.
In this case, Theorem \ref{theorem: faithfulness and part faithfulness} does not apply, because the distribution of $(X,Y)$ is not linearly $Y$-faithful to any directed acyclic graph in which $Y$ is childless.
Examples \ref{ex.2} and \ref{ex.3} show distributions where partial faithfulness does hold. In Example \ref{ex.2}, the distribution of $(X,Y)$ is linearly $Y$-faithful to a directed acyclic graph in which $Y$ is childless, and hence partial faithfulness follows from Theorem \ref{theorem: faithfulness and part faithfulness}. In Example \ref{ex.3}, the distribution of $(X,Y)$ is not linearly $Y$-faithful to any directed acyclic graph in which $Y$ is childless, showing that this is not a necessary condition for partial faithfulness.

\begin{example}\label{ex.1} Consider the following Gaussian linear model:
\begin{eqnarray}\label{eq: model ex 1}
X^{(1)} = \eps_1,\ X^{(2)} = X^{(1)} + \eps_2,\ Y = X^{(1)} - X^{(2)} +
\eps,
\end{eqnarray}
   where $\eps_1$, $\eps_2$ and $\eps$ are independent standard Normal random variables.
   This model can be represented by the linear model \eqref{mod.lin} with $\beta_1=1$ and $\beta_2=-1$. Furthermore,
   the distribution of $(X,Y) = (X^{(1)},X^{(2)},Y)$ factorizes according
   to the graph in Figure \ref{fig.ex.1}.

The distribution of $(X,Y)$ is not partially faithful, since $\rho(Y,X^{(1)} \mid \emptyset) = 0$ but $\rho(Y,X^{(1)} \mid X^{(2)})\neq 0$. Theorem \ref{theorem: faithfulness and part faithfulness} does not apply, because
the distribution of $(X,Y)$ is not linearly $Y$-faithful to any directed acyclic graph in which $Y$ is childless.
   For instance, the distribution of $(X,Y)$ is not linearly $Y$-faithful to the graph in Figure \ref{fig.ex.1}, since $\rho(X^{(1)},Y\mid \emptyset)=0$ but $X^{(1)}$ and $Y$ are not d-separated by the empty set. The zero correlation between $X^{(1)}$ and $Y$ occurs because $X^{(1)}=\eps_1$ drops out of the equation for $Y$ due to a parameter cancellation that is similar to equation \eqref{eq.parameter.cancellation} in the proof of Theorem \ref{theorem.mod.lin.part.faithful}: $Y = X^{(1)}-X^{(2)}+\eps = \eps_1-(\eps_1+\eps_2)+\eps = -\eps_2 + \eps$. The distribution of $(X,Y)$ is linearly faithful, and hence also linearly $Y$-faithful, to the graph $X^{(1)}\to X^{(2)}\leftarrow Y$, but this graph is not allowed in Theorem \ref{theorem: faithfulness and part faithfulness} because $Y$ has a child.

   Such failure of partial faithfulness can also be caused by hidden variables. To see this, consider the following
   Gaussian linear model:
   \begin{align*}
      X^{(1)}=\eps_1,\ X^{(3)}=\eps_3,\ X^{(2)}=X^{(1)}+X^{(3)}+\eps_2,\ Y=X^{(3)}+\eps,
   \end{align*}
   where $\eps_1,\eps_2,\eps_3$ and $\eps$ are independent standard Normal random variables. The distribution of $(X^{(1)},X^{(2)},X^{(3)},Y)$ factorizes according to the DAG $X^{(1)} \to X^{(2)} \leftarrow X^{(3)} \rightarrow Y$, and is linearly faithful to this DAG. Hence, the distribution of  $(X^{(1)},X^{(2)},X^{(3)},Y)$ is partially faithful by Theorem \ref{theorem: faithfulness and part faithfulness}. If, however, variable $X^{(3)}$ is hidden, so that we only observe $(X^{(1)},X^{(2)},Y)$, then the distribution of $(X^{(1)},X^{(2)},Y)$ has exactly the same conditional independence relationships as the distribution arising from \eqref{eq: model ex 1}. Hence, the distribution of $(X^{(1)},X^{(2)},Y)$ is not partially faithful.

\end{example}

\begin{example}\label{ex.2}
   Consider the following Gaussian linear model:
\begin{eqnarray*}
X^{(1)} = \eps_1,\ X^{(2)} = X^{(1)} + \eps_2,\ X^{(3)} = X^{(1)} +
\eps_3,\ X^{(4)} = X^{(2)} - X^{(3)} + \eps_4,\ Y = X^{(2)} + \eps,
\end{eqnarray*}
   where $\eps_1,\dots,\eps_4$ and $\eps$ are independent standard Normal random variables. This model can be represented by the linear model \eqref{mod.lin} with $\beta_1=\beta_3=\beta_4=0$ and $\beta_2=1$. Furthermore,
   the distribution of $(X,Y)=(X^{(1)},\dots,X^{(4)},Y)$ factorizes
   according to the graph in Figure \ref{fig.ex.2}.

The distribution of $(X,Y)$ is partially faithful, since $\rho(Y,X^{(j)} \mid X^{(\{j\}^C)})\neq 0$ only for $j=2$, and $\rho(Y,X^{(2)}  \mid  X^{(\cal S)}) \neq 0$ for any ${\cal S}\subseteq \{1,3,4\}$. In this example, partial faithfulness follows from Theorem \ref{theorem: faithfulness and part faithfulness}, since the distribution of $(X,Y)$ is linearly $Y$-faithful to the graph in Figure \ref{fig.ex.2} and $Y$ is childless in this graph.
The distribution of $(X,Y)$ is not linearly faithful to the graph in Figure \ref{fig.ex.2}, since $\Cor(X^{(1)},X^{(4)})=0$ but $X^{(1)}$ and $X^{(4)}$ are not d-separated by the empty set. Moreover,
   there does not exist any other directed acyclic graph to which the distribution of $(X,Y)$ is linearly faithful.
   Hence, this example also illustrates that linear $Y$-faithfulness is strictly weaker than linear faithfulness.
\end{example}

\begin{example}\label{ex.3}
  Consider the following Gaussian linear model:
\begin{eqnarray*}
 X^{(1)} = \eps_1,\ X^{(2)} = X^{(1)} + \eps_2,\ X^{(3)} = X^{(1)} +
 \eps_3,\ Y = X^{(2)} - X^{(3)} + \eps,
\end{eqnarray*}
  where $\eps_1$, $\eps_2$, $\eps_3$ and $\eps$ are independent standard
  Normal random variables. This model can be represented by the linear model \eqref{mod.lin} with $\beta_1=0$, $\beta_2=1$ and $\beta_3 = -1$. Furthermore, the distribution of $(X,Y) = (X^{(1)},X^{(2)},X^{(3)},Y)$
  factorizes according to the graph in Figure \ref{fig.ex.3}.

The distribution of $(X,Y)$ is partially faithful, since $\rho(Y,X^{(j)} \mid X^{(\{j\}^C)})\neq 0$ for $j\in \{2,3\}$, $\rho(Y,X^{(2)}  \mid  X^{(\cal S)}) \neq 0$ for any ${\cal S} \subseteq \{1,3\}$, and $\rho(Y,X^{(3)} \mid X^{(\cal S)}) \neq 0$ for any ${\cal S}\subseteq \{1,2\}$. However, in this case partial faithfulness does not follow from Theorem \ref{theorem: faithfulness and part faithfulness}, since the distribution of $(X,Y)$ is not linearly $Y$-faithful to the graph in Figure \ref{fig.ex.3}, since $\Cor(X^{(1)},Y)=0$ but $X^{(1)}$ and $Y$ are not d-separated by the empty set. Moreover, there does not exist any other directed acyclic graph to which the distribution of $(X,Y)$ is linearly $Y$-faithful.
\end{example}
\begin{figure}
  \vspace{0.3cm}
     \centering
     \psset{unit=5mm}
     \psset{linewidth=1pt}
     \psset{nodesep=3pt}
     \subfigure[Example \ref{ex.1}]{
         \label{fig.ex.1}
         \begin{pspicture}(-1,-.5)(4,3)
             \rput(0,3){\rnode{1}{$X^{(1)}$}}
             \rput(0,0){\rnode{2}{$X^{(2)}$}}
             \rput(3,3){\rnode{y}{$Y$}}
             \ncline{->}{1}{2}\Bput{$1$}
             \ncline{->}{2}{y}\Bput{$-1$}
             \ncline{->}{1}{y}\Aput{$1$}
         \end{pspicture}
     }\hspace{1cm}
     \subfigure[Example \ref{ex.2}]{
         \label{fig.ex.2}
         \begin{pspicture}(-1,-.5)(7,3)
             \rput(0,3){\rnode{1}{$X^{(1)}$}}
             \rput(3,3){\rnode{2}{$X^{(2)}$}}
             \rput(0,0){\rnode{3}{$X^{(3)}$}}
             \rput(3,0){\rnode{4}{$X^{(4)}$}}
             \rput(6,3){\rnode{y}{$Y$}}
             \ncline{->}{1}{2}\Aput{$1$}
             \ncline{->}{1}{3}\Bput{$1$}
             \ncline{->}{2}{4}\Aput{$1$}
             \ncline{->}{3}{4}\Aput{$-1$}
             \ncline{->}{2}{y}\Aput{$1$}
          \end{pspicture}
     }\hspace{1cm}
     \subfigure[Example \ref{ex.3}]{
         \label{fig.ex.3}
         \begin{pspicture}(-1,-.5)(4,3)
             \rput(0,3){\rnode{1}{$X^{(1)}$}}
             \rput(0,0){\rnode{3}{$X^{(3)}$}}
             \rput(3,3){\rnode{2}{$X^{(2)}$}}
             \rput(3,0){\rnode{y}{$Y$}}
             \ncline{->}{1}{2}\Aput{$1$}
             \ncline{->}{2}{y}\Aput{$1$}
             \ncline{->}{1}{3}\Bput{$1$}
             \ncline{->}{3}{y}\Aput{$-1$}
          \end{pspicture}
      }\caption{Graphical representation of the models used in Examples \ref{ex.1} - \ref{ex.3}.}
\end{figure}
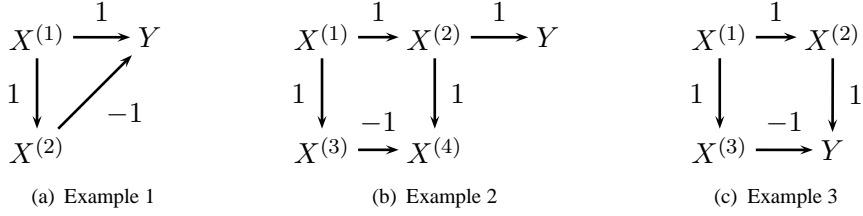

\section{The PC-simple algorithm}\label{sec.algo}
\subsection{Population version of the PC-simple algorithm}\label{sec.pop.version.pc.simple}

We now explore how partial faithfulness can be used for variable selection. In order to show the key ideas of the algorithm, we first
assume that the population partial correlations are known. In Section
\ref{sec.sample.versions.pc.simple} we consider the more realistic
situation where partial correlations are estimated from data.

First, using ${\cal S}=\emptyset$ in expression \eqref{eq: def partial faithfulness for mod.lin} yields that $\beta_j=0$ if $\Cor(Y,X^{(j)})=0$ for some $j\in \{1,\dots,p\}$. This shows that the active set $\cal A$ cannot contain any $j$ for which $\Cor(Y,X^{(j)})=0$.
Hence, we can screen all marginal correlations between pairs $(Y,X^{(j)})$, $j=1,\dots,p$, and build a first
set of candidate active variables
\begin{eqnarray}
  {\cal A}^{[1]} = \{j=1,\ldots ,p;\ \Cor(Y,X^{(j)}) \neq 0\}. \label{screen1}
\end{eqnarray}
We call this the $\mathrm{step}_1$ active set or the correlation screening
active set, and we know by partial faithfulness that
\begin{eqnarray}\label{corscreen}
  {\cal A} \subseteq {\cal A}^{[1]}.
\end{eqnarray}
Such correlation screening may reduce the dimensionality of the problem
by a substantial amount, and due to (\ref{corscreen}),
we could use other variable selection methods on the reduced set of variables ${\cal A}^{[1]}$.

Furthermore, for each $j \in {\cal A}^{[1]}$ expression \eqref{eq: def partial faithfulness for mod.lin} yields that
\begin{eqnarray}\label{screen2}
   \rho(Y,X^{(j)} \mid X^{(k)}) = 0\ \mbox{for some}\ k \in {\cal A}^{[1]} \setminus \{j\} \ \mbox{implies} \ \beta_j = 0.
\end{eqnarray}
That is, for checking whether the $j$th covariate remains in the model, we
can additionally screen all partial correlations of order one. We only
consider partial correlations given variables in the $\mathrm{step}_1$
active set ${\cal A}^{[1]}$. This is similar to what is done in the PC
algorithm, and yields an important computational reduction while still
allowing us to eventually identify the true active set $\cal A$. Thus,
screening partial correlations of order one using (\ref{screen2}) leads to
a smaller active set
\begin{eqnarray*}
  {\cal A}^{[2]} = \{j \in {\cal A}^{[1]};\ \rho(Y,X^{(j)} \mid X^{(k)}) \neq
  0\ \mbox{for all}\ k \in {\cal A}^{[1]} \setminus \{j\}\}  \subseteq {\cal A}^{[1]}.
\end{eqnarray*}
This new $\mathrm{step}_2$ active set ${\cal
  A}^{[2]}$ further reduces the dimensionality of the candidate active set, and because of \eqref{screen2} we still have that ${\cal A}^{[2]} \supseteq {\cal A}$.
We can continue
screening higher-order partial correlations, resulting in a nested sequence of
  $\mathrm{step}_m$ active sets
\begin{eqnarray}\label{screen-models}
  {\cal A}^{[1]} \supseteq {\cal A}^{[2]} \supseteq \cdots \supseteq {\cal
  A}^{[m]} \supseteq \cdots \supseteq {\cal A}.
\end{eqnarray}
A $\mathrm{step}_m$ active set ${\cal A}^{[m]}$ could be used as dimension
reduction together with any favored variable selection method in the
reduced linear model with covariates corresponding to indices in ${\cal
  A}^{[m]}$. Alternatively, we can continue the algorithm until the
candidate active set does not change
  anymore. This leads to the PC-simple algorithm, shown in pseudocode in Algorithm
  \ref{alg.PC.simple.pop}.

\begin{algorithm}[!htbp]
   \caption{The population version of the PC-simple algorithm.}
   \label{alg.PC.simple.pop}
   \begin{algorithmic}[1]
   \STATE Set $m=1$. Do correlation screening, and build the
   $\mathrm{step}_1$ active set\\
   ${\cal A}^{[1]} = \{j=1,\ldots ,p;\ \Cor(Y,X^{(j)}) \neq 0\}$ as in \eqref{screen1}.
   \REPEAT
   \STATE $m = m+1$. Construct the $\mathrm{step}_m$ active set:
   \begin{align*}
     {\cal A}^{[m]}  = \{ & j \in {\cal A}^{[m-1]}; \ \rho(Y,X^{(j)} \mid X^{({\cal S})}) \neq
     0\\
     &  \mbox{for all}\ {\cal S} \subseteq {\cal A}^{[m-1]} \setminus
     \{j\}\ \mbox{with}\ |{\cal S}| = m-1\}.
   \end{align*}
   \UNTIL $|{\cal A}^{[m]}| \le m$.
   \end{algorithmic}
\end{algorithm}

The value $m$ that is reached in Algorithm
\ref{alg.PC.simple.pop} is called $m_{\reach}$:
\begin{eqnarray}\label{reach}
   m_{\reach} = \min\{m;\ |{\cal A}^{[m]}| \le m\}.
\end{eqnarray}
The following theorem shows correctness of the population version of the
PC-simple algorithm.
\begin{theorem}\label{theorem.correctness.pc.simple.pop}
   For the linear model (\ref{mod.lin}) satisfying (C1) and partial
   faithfulness, the
   population version of the PC-simple algorithm identifies the true underlying
   active set, i.e., ${\cal
     A}^{[m_{\reach}]} = {\cal A} = \{j=1,\ldots ,p;\ \beta_j \neq 0\}$.
\end{theorem}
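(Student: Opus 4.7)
The plan is to split the analysis into a ``soundness'' claim showing ${\cal A} \subseteq {\cal A}^{[m]}$ at every step, a ``completeness'' claim that any $j\notin{\cal A}$ is removed once the conditioning set is rich enough, and a short cardinality argument pinning down $m_{\reach}$.

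First, I would prove by induction on $m$ that ${\cal A} \subseteq {\cal A}^{[m]}$ for every $m$. The base case $m=1$ is Corollary~\ref{corollary.all.par.corr.neq.0} applied with ${\cal S}=\emptyset$: if $j\in{\cal A}$ then $\Cor(Y,X^{(j)})\neq 0$, so $j\in{\cal A}^{[1]}$. For the induction step, if ${\cal A}\subseteq{\cal A}^{[m-1]}$ and $j\in{\cal A}$, then every ${\cal S}\subseteq{\cal A}^{[m-1]}\setminus\{j\}$ is contained in $\{j\}^C$, so Corollary~\ref{corollary.all.par.corr.neq.0} directly gives $\rho(Y,X^{(j)} \mid X^{({\cal S})})\neq 0$, placing $j$ in ${\cal A}^{[m]}$.

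Next, I would establish the following completeness fact: whenever $j \notin {\cal A}$ and ${\cal A}\subseteq{\cal S}\subseteq\{j\}^C$, one has $\rho(Y,X^{(j)} \mid X^{({\cal S})})=0$. This is a routine regression computation from model (\ref{mod.lin}) together with (C1): the expression $\delta + \sum_{k\in{\cal A}}\beta_k X^{(k)}$ is linear in $X^{({\cal S})}$ and leaves residual $\epsilon$, which is uncorrelated with every $X^{(k)}$, so by the normal equations it is the (unique, by (C1)) best linear predictor of $Y$ given $X^{({\cal S})}$. Hence $\parcov(Y,X^{(j)}\mid X^{({\cal S})})=\Cov(\epsilon, X^{(j)}-\widehat X^{(j)}_{\cal S})=0$, where $\widehat X^{(j)}_{\cal S}$ is the linear projection of $X^{(j)}$ onto $X^{({\cal S})}$.

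Finally, I combine these. By soundness, $|{\cal A}^{[m]}|\geq\peff$ for every $m$, so $m_{\reach}\geq\peff$. If $m_{\reach}=\peff$, the stopping condition $|{\cal A}^{[m_{\reach}]}|\leq\peff$ together with ${\cal A}\subseteq{\cal A}^{[m_{\reach}]}$ forces equality ${\cal A}^{[m_{\reach}]}={\cal A}$. If instead $m_{\reach}>\peff$, then $|{\cal A}^{[\peff]}|>\peff$; for any $j\in{\cal A}^{[\peff]}\setminus{\cal A}$ the set ${\cal A}^{[\peff]}\setminus\{j\}$ has at least $\peff$ elements and contains ${\cal A}$, so I can pick a subset ${\cal S}$ of size exactly $\peff=(\peff+1)-1$ with ${\cal A}\subseteq{\cal S}\subseteq{\cal A}^{[\peff]}\setminus\{j\}$. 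Completeness gives $\rho(Y,X^{(j)} \mid X^{({\cal S})})=0$, so $j$ is eliminated at step $\peff+1$. Together with soundness this yields ${\cal A}^{[\peff+1]}={\cal A}$, satisfying the stopping rule and giving $m_{\reach}=\peff+1$. In either case ${\cal A}^{[m_{\reach}]}={\cal A}$.

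The main obstacle is not deep; it is essentially bookkeeping. The only thing to be careful about is that at step $\peff+1$ the set ${\cal A}^{[\peff]}\setminus\{j\}$ is large enough to admit a conditioning set of the required size $\peff$ containing ${\cal A}$, and that both termination scenarios ($m_{\reach}=\peff$ and $m_{\reach}=\peff+1$) are treated. The completeness calculation is the one place where the linear model and (C1) enter beyond what is already packaged in Corollary~\ref{corollary.all.par.corr.neq.0}.
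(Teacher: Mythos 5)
Your proof is correct, and it rests on the same two pillars as the paper's argument: partial faithfulness (via Corollary~\ref{corollary.all.par.corr.neq.0}) guarantees that active variables are never deleted, and conditioning on a set containing ${\cal A}$ annihilates the partial correlation of any non-active variable. The difference is organizational. The paper argues by contradiction at the stopping time: if some $j \in {\cal A}^{[m_{\reach}]}\setminus{\cal A}$ existed, then ${\cal A} \subseteq {\cal A}^{[m_{\reach}-1]}\setminus\{j\}$ and $|{\cal A}| \le |{\cal A}^{[m_{\reach}]}|-1 \le m_{\reach}-1$ show that ${\cal S}={\cal A}$ was among the conditioning sets actually tested, so $\rho(Y,X^{(j)} \mid X^{({\cal A})})\neq 0$, contradicting $\beta_j=0$. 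You instead run the algorithm forward and determine the stopping level explicitly: $m_{\reach}$ is either $\peff$ (equality forced by cardinality) or $\peff+1$ (every spurious variable is killed at step $\peff+1$ by the conditioning set ${\cal A}$ itself, which is admissible there because $|{\cal A}|=\peff$ and ${\cal A}\subseteq{\cal A}^{[\peff]}\setminus\{j\}$). Your route is slightly longer but buys two things the paper leaves implicit: an actual proof of the ``completeness'' fact that $\beta_j=0$ implies $\rho(Y,X^{(j)} \mid X^{({\cal S})})=0$ whenever ${\cal A}\subseteq{\cal S}\subseteq\{j\}^C$ (the paper asserts only the special case ${\cal S}={\cal A}$, without argument), and the sharp a priori bound $m_{\reach}\le\peff+1$. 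One cosmetic point: when $\peff=0$ your second case refers to ${\cal A}^{[0]}$, which is not defined; that degenerate case should be dispatched separately by noting that completeness with ${\cal S}=\emptyset$ already gives ${\cal A}^{[1]}=\emptyset$, hence $m_{\reach}=1$.
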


A proof is given in the Appendix. Theorem \ref{theorem.correctness.pc.simple.pop} shows that
partial faithfulness, which
is often weaker than linear faithfulness, is sufficient to guarantee
correctness of the population PC-simple algorithm.
%
The PC-simple algorithm is similar to the PC algorithm
\citep[Section 5$\cdot$4$\cdot$2]{sgs00}, but there are two important differences.
First, the PC algorithm considers all ordered pairs of variables in
$(X^{(1)},\dots,X^{(p)},Y)$, while we only consider ordered pairs $(Y,
X^{(j)})$, $j\in \{1,\dots,p\}$, since we are
only interested in associations between $Y$ and $X^{(j)}$. Second, the PC algorithm
considers conditioning sets in the neighborhoods of
both $Y$ and $X^{(j)}$, while we only consider conditioning
sets in the neighborhood of $Y$.


\subsection{Sample version of the PC-simple algorithm}\label{sec.sample.versions.pc.simple}

For finite samples, the partial correlations must be estimated.
We use the following shorthand notation:
\begin{align*}
  \begin{array}{ll}
     \rho(Y,j  \mid  {\cal S}) = \rho(Y,X^{(j)} \mid X^{(\cal S)}),  & \quad \hat \rho(Y,j  \mid  {\cal S}) = \hat{\rho}(Y,X^{(j)} \mid X^{(\cal S)}),\\
     \rho(i,j  \mid  {\cal S}) = \rho(X^{(i)},X^{(j)} \mid X^{(\cal S)}), & \quad \hat \rho(i,j  \mid  {\cal S}) = \hat{\rho}(X^{(i)},X^{(j)} \mid X^{(\cal S)}),
  \end{array}
\end{align*}
where the hat-versions denote sample partial correlations. These can be
calculated recursively, since for any $k \in {\cal S}$ we have
\begin{eqnarray*}
\hat \rho(Y,j  \mid  {\cal S}) = \frac{ \hat \rho(Y,j  \mid  {\cal S} \setminus \{k\}) -
    \hat \rho(Y,k  \mid  {\cal S} \setminus \{k\})
    \hat \rho(j,k  \mid  {\cal S} \setminus \{k\})}{[{\{1 - \hat \rho(Y,k  \mid  {\cal S}
    \setminus \{k\})^2\}\{1 - \hat \rho(j,k  \mid  {\cal S} \setminus
    \{k\})^2\}}]^{1/2}}.
\end{eqnarray*}
In order to test whether a partial correlation is zero, we apply Fisher's
$Z$-transform
\begin{eqnarray}\label{ztrans}
Z(Y,j \mid {\cal S}) = \frac{1}{2} \log \left \{\frac{1 +
    \hat \rho(Y,j \mid {\cal S})}{1 - \hat \rho(Y,j \mid {\cal S})} \right\}.
\end{eqnarray}
Classical decision theory in the Gaussian case yields the following
rule. Reject the null-hypothesis $H_0(Y,j \mid {\cal S}):\ \rho(Y,j \mid
{\cal S}) = 0$ against the two-sided alternative $H_A(Y,j \mid {\cal S}):\
\rho(Y,j \mid {\cal S}) \neq 0$ if $(n-|{\cal S}|-3)^{1/2} |Z(Y,j \mid {\cal
  S})| > \Phi^{-1}(1 - \alpha/2)$, where $\alpha$ is the significance level
and $\Phi(\cdot)$ is the standard Normal cumulative distribution
function. Even in the absence of Gaussianity, the rule above is a
reasonable thresholding operation.

The sample version of the PC-simple algorithm is obtained by replacing the
statements about $\rho(Y,X^{(j)} \mid X^{({\cal S})}) \neq 0$ in Algorithm
\ref{alg.PC.simple.pop} by
\begin{eqnarray*}
  (n-|{\cal S}|-3)^{1/2} |Z(Y,j \mid {\cal S})| > \Phi^{-1}(1 -
  \alpha/2).
\end{eqnarray*}
The resulting estimated set of variables is denoted by $\widehat{\cal
  A}(\alpha) = \widehat{\cal A}^{\hat{m}_{\reach}}(\alpha)$, where
$\hat{m}_{\reach}$ is the estimated version of the quantity in
(\ref{reach}). The only tuning parameter $\alpha$ of the PC-simple algorithm is the significance
level for testing the partial correlations.

The PC-simple algorithm is very different from a greedy scheme, since
it screens many correlations or partial correlations at once
and may delete many variables at once. Furthermore, it is a more sophisticated
pursuit of variable screening than the marginal correlation approach in
\citet{fanlv07} or the low-order partial correlation method in
\citet{willepb06}. \citet{castrov06} extended the latter and considered a
limited-order partial correlation approach. However, their method does not
exploit the clever trick
of the PC-simple algorithm that it is sufficient to consider only
conditioning sets ${\cal S}$ which
have survived in the previous $\mathrm{step}_{m-1}$ active set ${\cal
  A}^{[m-1]}$. Therefore, the algorithm of \citet{castrov06} is often
infeasible and must be approximated by a Monte Carlo approach.

Since the PC-simple algorithm is a simplified version of the PC algorithm,
its computational complexity is bounded above by that of the PC algorithm.
This is difficult to evaluate exactly, but a crude bound is $O(p^{\peff})$,
see \citet[formula (4)]{kabu07}.  Section \ref{sec.numerical} shows that we
can easily use the PC-simple algorithm in problems with thousands of
covariates.

\section{Asymptotic results in high dimensions}\label{sec.asymptotics}

\subsection{Consistency of the PC-simple algorithm}

We now show that the PC-simple algorithm is consistent for variable
selection, even if $p$ is much larger than $n$. We consider the linear
model in \eqref{mod.lin}. To capture high-dimensional behavior, we let the
dimension grow as a function of sample size and thus, $p = p_n$ and also
the distribution of $(X,Y)$, the regression coefficients $\beta_j =
\beta_{j,n}$, and the active set ${\cal A} = {\cal A}_n$ with $\peff =
\peff_n = |{\cal A}_n|$ change with $n$. Our assumptions are as follows:
%
\begin{enumerate}
\item[(D1)] The distribution $P_n$ of $(X,Y)$ is multivariate Normal and satisfies (C1) and the partial faithfulness condition for all $n$.
\item[(D2)] The dimension $p_n = O(n^a)$ for some $0 \le a < \infty$.
\item[(D3)] The cardinality of the active set $\peff_n = |{\cal A}_n| =
  |\{j=1,\ldots ,p_n;\ \beta_{j,n} \neq 0\}|$ satisfies: $\peff_n = O(n^{1
    - b})$ for some $0 < b \le 1$.
\item[(D4)] The partial correlations $\rho_n(Y,j \mid {\cal S}) = \rho(Y,X^{(j)} \mid X^{({\cal S})})$ satisfy:
\begin{eqnarray*}
  & &\inf\Big\{|\rho_n(Y,j \mid {\cal S})|;\ j=1,\ldots ,p_n,\ {\cal S}
  \subseteq \{j\}^C,\ |{\cal S}| \le \peff_n\ \mbox{with}\
  \rho_n(Y,j \mid {\cal S}) \neq 0\Big\} \ge c_n,
\end{eqnarray*}
where $c_n^{-1} = O(n^{d})$ for some $0 \le d < b/2$, and $b$ is as in (D3).
\item[(D5)] The partial correlations $\rho_n(Y,j \mid {\cal S})$ and
  $\rho_n(i,j \mid {\cal S}) = \rho(X^{(i)},X^{(j)} \mid X^{({\cal S})})$ satisfy:
\begin{eqnarray*}
& &(i)\ \sup_{n,j,{\cal S} \subseteq \{j\}^C,|{\cal S}| \le \peff_n}
|\rho_n(Y,j \mid {\cal S})| \le M < 1,\\
& &(ii)\ \sup_{n,i\neq j,{\cal S} \subseteq
  \{i,j\}^C,|{\cal S}| \le \peff_n} |\rho_n(i,j \mid {\cal S})| \le M < 1.
\end{eqnarray*}
\end{enumerate}
Assumption (D1) is made to simplify asymptotic calculations, and it is not
needed in the population case.
Unfortunately, it is virtually impossible to check assumptions (D1)-(D5)
in practice, with the exception of (D2). However, this is common to
assumptions for high-dimensional variable selection, such as the
neighborhood stability condition \citep{mebu06}, the irrepresentable condition
\citep{zhaoyu06}, or the restrictive eigenvalue assumption
\citep{bicketal07}.
A more detailed discussion of assumptions (D1)-(D5) is given in Section
\ref{subsec.cond}.

Letting $\widehat{\cal A}_n(\alpha)$ denote the estimated set of variables from
the PC-simple algorithm
in Section \ref{sec.sample.versions.pc.simple} with significance level $\alpha$, we obtain the following
consistency result:
\begin{theorem}\label{theorem.cons.pc.simple}
  Consider the linear model (\ref{mod.lin}) and assume
  (D1)-(D5). Then there exists a sequence $\alpha_n \to 0\ (n \to \infty)$ and a constant $C>0$
  such that the PC-simple algorithm satisfies
\begin{eqnarray*}
  \PP\{\widehat{\cal A}_n(\alpha_n) = {\cal A}_n\}
  = 1 - O\{\exp(-Cn^{1 - 2d})\} \to 1\ (n \to
  \infty),
\end{eqnarray*}
where $d$ is as in (D4).
\end{theorem}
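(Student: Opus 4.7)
The plan is to reduce the sample-level claim to the population correctness result (Theorem \ref{theorem.correctness.pc.simple.pop}) by showing that, with probability at least $1 - O\{\exp(-Cn^{1-2d})\}$, every partial correlation test carried out by the sample PC-simple algorithm returns the same accept/reject decision as its population counterpart. On this good event the full trajectory of estimated active sets $\widehat{\cal A}_n^{[m]}$ coincides with the population trajectory ${\cal A}_n^{[m]}$, and Theorem \ref{theorem.correctness.pc.simple.pop} delivers $\widehat{\cal A}_n(\alpha_n) = {\cal A}_n$. The argument then consists of three ingredients: calibrating $\alpha_n$, bounding the error probability of a single test, and a union bound.

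For the calibration, I would choose $\alpha_n$ so that the rejection threshold on $|Z(Y,j \mid {\cal S})|$ lies roughly halfway between the null value $0$ and the Fisher $Z$-image of the minimum signal $c_n$ in (D4). Since $c_n \to 0$, the Fisher transform is locally linear, so setting $\alpha_n = 2\{1 - \Phi(n^{1/2} c_n / 2)\}$ suffices. This tends to $0$ because $n c_n^2 \ge n^{1-2d} \to \infty$, and it yields $\Phi^{-1}(1 - \alpha_n/2)/(n - |{\cal S}| - 3)^{1/2} \sim c_n/2$ uniformly for $|{\cal S}| \le \peff_n = o(n)$. Consequently, a single test returns a wrong decision only when $|\hat\rho(Y,j \mid {\cal S}) - \rho(Y,j \mid {\cal S})|$ exceeds a constant multiple of $c_n$; here I use that the Fisher transform is Lipschitz on $|\rho| \le M$, which holds by (D5).

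For the single-test bound, I would invoke an exponential concentration inequality of the form
\begin{equation*}
\PP\bigl\{|\hat\rho(Y,j \mid {\cal S}) - \rho(Y,j \mid {\cal S})| > \varepsilon\bigr\} \le K_1 (n - |{\cal S}|)\exp\bigl\{- K_2 (n - |{\cal S}|)\varepsilon^2\bigr\},
\end{equation*}
valid uniformly in $j$ and in ${\cal S}$ with $|{\cal S}| \le \peff_n$, for $\varepsilon$ bounded away from $1$, with $K_1, K_2$ depending only on the constant $M$ in (D5). Plugging $\varepsilon \asymp c_n$ and $n - |{\cal S}| \asymp n$ gives a per-test error of order $n\exp(-Kn^{1-2d})$. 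For the union bound, the algorithm only tests pairs $(j,{\cal S})$ with $j \in \widehat{\cal A}_n^{[m-1]}$ and ${\cal S} \subseteq \widehat{\cal A}_n^{[m-1]} \setminus \{j\}$, and on the good event $\hat m_{\reach} \le \peff_n + 1$. A crude upper bound on the total number of tests is therefore $p_n \sum_{k=0}^{\peff_n}\binom{p_n}{k} = \exp\{O(\peff_n \log p_n)\} = \exp\{O(n^{1-b}\log n)\}$. Multiplying gives
\begin{equation*}
\PP\bigl\{\widehat{\cal A}_n(\alpha_n) \neq {\cal A}_n\bigr\} \le \exp\bigl\{O(n^{1-b}\log n) - Kn^{1-2d}\bigr\},
\end{equation*}
and since $d < b/2$ by (D4), the $n^{1-2d}$ term dominates $n^{1-b}\log n$, yielding the claimed rate for a suitable $C > 0$.

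The step I expect to be the main obstacle is the uniform concentration inequality above. Because $|{\cal S}|$ may grow with $n$ up to order $\peff_n = n^{1-b}$, the inequality is genuinely high-dimensional and cannot be obtained from a fixed-dimension central limit theorem or delta-method argument. I would prove it by induction on the recursive formula for the sample partial correlation displayed in Section \ref{sec.sample.versions.pc.simple}, propagating Gaussian-type exponential bounds on empirical second moments and using (D5) at every step to keep the sample partial variances in the denominators uniformly bounded away from zero on a favourable event of the same exponential probability. A closely related inequality is available in \citet{kabu07} for the PC algorithm and can be adapted to the present setting.
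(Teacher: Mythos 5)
Your proposal follows essentially the same route as the paper's proof: reduction to the population correctness result of Theorem \ref{theorem.correctness.pc.simple.pop}, the identical calibration $\alpha_n = 2\{1-\Phi(n^{1/2}c_n/2)\}$, an exponential concentration inequality for sample partial correlations of order up to $\peff_n$ imported from \citet{kabu07} exactly as you anticipate, a union bound over the $O(p_n^{\peff_n+1})$ possible tests, and the observation that $d<b/2$ makes $n^{1-2d}$ dominate $n^{1-b}\log n$. The only cosmetic difference is that the paper works on the Fisher-$Z$ scale throughout (its Lemma A1, quoting Lemma 3 of \citet{kabu07}, which also absorbs via an extra exponential term the event that $\hat\rho$ leaves the region where the transform is Lipschitz), whereas you concentrate $\hat\rho$ first and transfer by Lipschitz continuity.
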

A proof is given in the Appendix. The value $\alpha_n$, although being
the significance level of a single test, is a tuning parameter which allows to
control type I and II errors over the many tests which are pursued in the
PC-simple algorithm. A possible choice yielding consistency is
$\alpha_n = 2\{1 - \Phi(n^{1/2} c_n/2)\}$. This choice depends on the unknown lower
bound of the partial correlations in (D4).

\subsection{Discussion of the conditions of Theorem
\ref{theorem.cons.pc.simple}}\label{subsec.cond}

There is much recent work on high-dimensional and
computationally tractable variable selection, most of it considering
versions of the Lasso \citep{tibs96} or the Dantzig
selector \citep{cantao07}. Neither of these methods exploit partial
faithfulness. Hence, it is interesting to discuss our conditions with a view towards these
other established results.

For the Lasso, \citet{mebu06} proved that a so-called neighborhood stability
condition is sufficient and almost necessary for consistent variable
selection, where the word almost refers to the fact that a strict inequality
with the relation $<$ appears in the sufficient condition whereas
for necessity, there is a $\le$
relation. \citet{zou06} and \citet{zhaoyu06} gave a different, but equivalent
condition. In the latter work, it is called the irrepresentable
condition. The adaptive Lasso \citep{zou06} or other two-stage
Lasso and thresholding procedures \citep{meyu06} yield consistent variable
selection under weaker conditions than the neighborhood
stability or irrepresentable condition, see also Example \ref{ex.lasso.inconsistent}
below. Such two-stage procedures rely on bounds for $\|\hat{\beta} -
\beta\|_q\ (q=1,2)$ whose convergence
rate to zero is guaranteed under possibly weaker restricted eigenvalue
assumptions on the design \citep{bicketal07} than what is required by the
irrepresentable or neighborhood stability condition. All these different
assumptions are not
directly comparable with our conditions (D1)-(D5).

Assumption (D2) allows for an arbitrary polynomial growth of dimension as a
function of sample size, while (D3) is a
sparseness assumption in terms of the number of effective variables. Both
(D2) and (D3) are fairly standard assumptions in high-dimensional
asymptotics. More critical are the partial faithfulness requirement in
(D1), and the conditions on the partial correlations in (D4) and (D5).

We interpret these conditions with respect to the design $X$ and the
conditional distribution of $Y$ given $X$. Regarding
the random design, we assume (C1) and (D5,(ii)). Requiring (C1) is
rather weak, since it does not impose constraints on the behavior of the
covariance matrix $\Sigma_X = \Sigma_{X;n}$ in the sequence of
distributions $P_n\ (n \in \Nat)$, except for strict positive definiteness
for all $n$. Condition (D5,(ii)) excludes perfect collinearity, where the fixed upper
bound on partial correlations places some additional restrictions on
the design.
Regarding the conditional distribution of $Y$ given $X$, we
require partial faithfulness. This
becomes more explicit by invoking Theorem
\ref{theorem.mod.lin.part.faithful}: partial faithfulness follows by
assuming condition (C2) in Section \ref{sec.model} for every $n$, which
involves the regression coefficients only. Conditions (D4) and (D5,(i))
place additional restrictions on
both the design $X$ and the conditional distribution of $Y$ given $X$.

Assumption (D4) is used for controlling the type II errors in
the many tests of the PC-simple algorithm, see the proof of Theorem
\ref{theorem.cons.pc.simple}. This assumption is slightly stronger
than requiring that all non-zero
regression coefficients are larger than a detectability-threshold, which
has been previously used for
analyzing the Lasso in \citet{mebu06}, \citet{zhaoyu06} and
\citet{meyu06}. Clearly, assumptions on the design $X$ are not sufficient
for consistent variable selection with any method and some additional
detectability assumption is needed. Our assumption (D4) is restrictive, as
it does not allow small non-zero low-order partial
correlations. Near partial faithfulness \citep{robins03}, where small partial
correlations would imply that corresponding regression
coefficients are small, would be a more realistic framework in
practice. However, this would make the theoretical arguments much more involved, and
we do not pursue this in this paper.

Although our assumptions
are not directly comparable to the neighborhood stability or
irrepresentable condition for the Lasso,
it is easy to construct examples where the Lasso fails to be
consistent while the PC-simple algorithm recovers the true set of variables, as
shown by the following example.

\begin{example}\label{ex.lasso.inconsistent}
    Consider a Gaussian linear model as in (\ref{mod.lin}) with $p = 4$, $\peff = 3$, $\sigma^2=1$, $\mu_X = (0,\dots,0)^T$
    \begin{eqnarray*}
        & &\Sigma_X = \left( \begin{array}{cccc}
            1 & \rho_1 & \rho_1 & \rho_2 \\
            \rho_1 & 1 & \rho_1 & \rho_2 \\
            \rho_1 & \rho_1 & 1 & \rho_2 \\
            \rho_2 & \rho_2 & \rho_2 & 1
        \end{array}
        \right),\ \ \rho_1 = -0.4,\ \rho_2 = 0.2,\\
    \end{eqnarray*}
where $\beta_1$, $\beta_2$, $\beta_3$ are fixed i.i.d. realizations from
${\cal N}(0,1)$ and $\beta_4 = 0$.

    It is shown in \citet[Cor. 1]{zou06} that the Lasso is inconsistent for variable
    selection in this model. On the other hand, (D1) holds because of
    Theorem \ref{theorem.mod.lin.part.faithful}, and also (D5)
    is true. Since the dimension $p$ is fixed, (D2), (D3) and (D4) hold automatically. Hence, the
    PC-simple algorithm is consistent for variable selection. It should be noted
    though that the adaptive Lasso is also consistent for this example.
\end{example}
We can slightly modify Example \ref{ex.lasso.inconsistent} to make it high-dimensional.
Consider $\peff = 3$ active variables, with
design and coefficients as in Example \ref{ex.lasso.inconsistent}. Moreover, consider $p_n -
\peff$ noise covariates which are independent from the active variables, with $p_n$ satisfying (D2).
Let the design satisfy (C1) and (D5,(ii)), for example by taking the noise covariates to be mutually independent.
Then assumptions (D1)-(D5) hold, implying
consistency of the PC-simple algorithm, while the Lasso is inconsistent.

\subsection{Asymptotic behavior of correlation screening}

Correlation screening is equivalent to sure independence
screening of \citet{fanlv07}, but our assumptions and reasoning via partial faithfulness are very different
from the work of Fan \& Lv. Denote by $\widehat{\cal A}_n^{[1]}(\alpha)$ the correlation screening
active set, estimated from data, using significance level $\alpha$, obtained from the first step of the sample version of the PC-simple algorithm. We do not require any sparsity conditions for consistency. We define:
\begin{enumerate}
\item[(E1)] as assumption (D4) but for marginal correlations
  $\Cor(Y,X^{(j)}) = \rho_n(Y,j)$ only.
\item[(E2)] as assumption (D5) but for marginal correlations
  $\Cor(Y,X^{(j)}) = \rho_n(Y,j)$ only.
\end{enumerate}

\begin{theorem}\label{theorem.cons.corr.screening}
Consider the linear model (\ref{mod.lin}) and assume
(D1), (D2), (E1) and (E2). Then there exists a sequence $\alpha_n \to 0\ (n
\to \infty)$ and a constant $C>0$
such that:
\begin{eqnarray*}
\PP\{\widehat{\cal A}_n^{[1]}(\alpha_n) \supseteq {\cal
  A}_{n}\}
= 1 - O\{\exp(-Cn^{1 - 2d})\} \to 1\ (n \to \infty),
\end{eqnarray*}
where $d >0$ is as in (E1).
\end{theorem}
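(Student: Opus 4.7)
The plan is to follow the same template as the proof of Theorem~\ref{theorem.cons.pc.simple}, but since only marginal correlations are tested in the first step, no sparsity assumption or bound on the size of conditioning sets is required.

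First, I would reduce the problem to controlling Type~II errors at the marginal level. By (D1), the distribution is partially faithful, so Corollary~\ref{corollary.all.par.corr.neq.0} applied with $\mathcal{S}=\emptyset$ yields that every $j\in\mathcal{A}_n$ satisfies $\rho_n(Y,j)\neq 0$; assumption~(E1) then gives $|\rho_n(Y,j)|\ge c_n$ for every such $j$. Consequently,
\begin{equation*}
\PP\{\widehat{\mathcal{A}}_n^{[1]}(\alpha_n)\not\supseteq \mathcal{A}_n\}
\;\le\;\sum_{j\in\mathcal{A}_n}\PP\bigl\{(n-3)^{1/2}|Z(Y,j)|\le \Phi^{-1}(1-\alpha_n/2)\bigr\},
\end{equation*}
so it suffices to bound each summand and then apply a union bound; note that $|\mathcal{A}_n|\le p_n=O(n^a)$ by~(D2).

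Second, I would choose $\alpha_n = 2\{1-\Phi(n^{1/2}c_n/2)\}$, so that $\Phi^{-1}(1-\alpha_n/2)=n^{1/2}c_n/2$. Under (D1) the data are Gaussian, and (E2)(i) keeps the true marginal correlations uniformly bounded away from $\pm 1$. A standard concentration inequality for the sample correlation under Gaussianity, together with the local Lipschitz property of the Fisher transform $\mathrm{arctanh}(\cdot)$ on $[-M,M]$, yields constants $C_1,C_2>0$ (depending only on $M$) such that
\begin{equation*}
\PP\bigl(|Z(Y,j) - \tilde Z(Y,j)| > \eta\bigr)\;\le\; C_1\exp(-C_2 n\eta^2)\qquad(0<\eta\le 1),
\end{equation*}
where $\tilde Z(Y,j)=\mathrm{arctanh}\{\rho_n(Y,j)\}$. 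Since $|\mathrm{arctanh}(x)|\ge|x|$ on $(-1,1)$ we have $|\tilde Z(Y,j)|\ge c_n$, and using $(n-3)^{1/2}/n^{1/2}\to 1$ one checks that the Type~II event is contained, for $n$ large, in $\{|Z(Y,j)-\tilde Z(Y,j)|\ge c_n/3\}$, giving a bound of order $\exp(-C_2 n c_n^2/9) = \exp(-C'n^{1-2d})$ per test.

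Finally, combining with the union bound produces
\begin{equation*}
\PP\{\widehat{\mathcal{A}}_n^{[1]}(\alpha_n)\not\supseteq \mathcal{A}_n\}
= O\bigl(n^a \exp(-C'n^{1-2d})\bigr) = O\bigl(\exp(-Cn^{1-2d})\bigr)
\end{equation*}
for any $C<C'$, since $d<1/2$ makes the exponential term dominate any polynomial factor. The main obstacle is establishing the concentration bound for the Fisher $Z$-statistic uniformly over marginal correlations with $|\rho_n(Y,j)|\le M<1$; this is essentially the same technical ingredient used for Theorem~\ref{theorem.cons.pc.simple}, and can be invoked here by quoting a lemma analogous to the one underlying the sample-version analysis (cf.\ \citet{kabu07}). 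Once that uniform exponential inequality is in hand, the rest of the argument is a straightforward union bound, unencumbered by conditioning-set combinatorics.
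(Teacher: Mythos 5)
Your proposal is correct and follows essentially the same route as the paper: the paper likewise reduces the claim to a union bound over type~II error events $E^{II}_j$, bounds each via the exponential inequality for the Fisher $Z$-transform (formula~(\ref{ADD3}), which rests on Lemma~\ref{lemma:3a} from \citet{kabu07} together with (E1) and (E2)), and absorbs the polynomial factor $O(p_n n)$ into the exponential using $d<b/2\le 1/2$. The only cosmetic difference is that you restrict the union to $j\in\mathcal{A}_n$ while the paper takes it over all $j=1,\ldots,p_n$; both give the same rate.
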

A proof is given in the Appendix.
A possible choice for $\alpha_n$ is
$\alpha_n = 2\{1- \Phi(n^{1/2} c_n/2)\}$.
As pointed out above, we do not make any assumptions on
sparsity. However, for non-sparse problems, many correlations may be
non-zero and $\widehat{\cal A}^{[1]}$ can still be large, for example
almost as large as the full set $\{1,\ldots ,p\}$.

Under some restrictive conditions on the covariance $\Sigma_X$ of the random
design, \citet{fanlv07} have shown that correlation screening, or sure
independence screening, is overestimating the active set ${\cal A}$, as
stated in Theorem \ref{theorem.cons.corr.screening}.
%
%
Theorem \ref{theorem.cons.corr.screening} shows that this result
also holds under very different assumptions on $\Sigma_X$ when partial
faithfulness is assumed in addition. 
Hence, our result justifies
correlation screening as a more general tool than what it appears to be from the setting of
\citet{fanlv07}, thereby extending the range of applications.

\section{Numerical results}\label{sec.numerical}

\subsection{Analysis for simulated data}\label{sec.ROC}

We simulate data according to a Gaussian linear model as in (\ref{mod.lin})
with $\delta = 0$ and $p$ covariates with $\mu_X=(0,\dots,0)^T$ and
covariance matrix
$\Sigma_{X;i,j} = \rho^{|i-j|}$, where $\Sigma_{X;i,j}$ denotes the $(i,j)$th entry of $\Sigma$. In order to
generate values for $\beta$, we follow
(C2): a certain number $\peff$ of coefficients
$\beta_j$ have a value different from zero. The values of the nonzero
$\beta_j$s
are sampled independently from a standard normal distribution and the indices
of the nonzero $\beta_j$s are evenly spaced between $1$ and $p$.
We consider two settings:
\begin{description}
\item[] low-dimensional: $p=19$, $\peff=3$, $n=100$; $\rho \in
  \{$0,0$\cdot$3,0$\cdot$6$\}$ with $1000$ replicates
\item[] high-dimensional: $p=499$, $\peff=10$, $n=100$; $\rho \in
  \{$0,0$\cdot$3,0$\cdot$6$\}$ with $300$ replicates
\end{description}

We evaluate the performance of the methods using receiver operating
characteristic curves which measure
the accuracy for variable selection independently from
the issue of choosing good tuning parameters. We compare the PC-simple
algorithm to the Lasso \citep{efron04lars} and Elastic Net \citep{zouhastie05}, using the \texttt{R}-packages
\texttt{pcalg}, \texttt{lars} and \texttt{elasticnet}, respectively.
For Elastic Net, we vary the $\ell^1$-penalty parameter only while keeping the $\ell^2$-penalty parameter
fixed at the default value from the \texttt{R}-package.

In the low-dimensional settings shown in Figures
\ref{fig:p19r0}, \ref{fig:p19r03}, \ref{fig:p19r06}, the
PC-simple algorithm clearly dominates the Lasso and Elastic Net for small false
positive rates, which are desirable in many applications. When focusing on the false positive rate arising from the default value for
$\alpha$=0$\cdot$05 in the PC-simple algorithm, indicated by the vertical lines, the PC-simple algorithm
outperforms the Lasso and Elastic Net by a large
margin. If the correlation among the covariates increases, the performance of Elastic Net deteriorates,
whereas the performances of the PC-simple algorithm and the Lasso do not vary much.

In the high-dimensional settings shown in Figures \ref{fig:p499r0}, \ref{fig:p499r03}, \ref{fig:p499r06}, the difference between the methods is small
for small false positive rates. The Lasso performs best, Elastic Net is worst, and the PC-simple algorithm is somewhere in
between. For larger false positive rates, the differences become more pronounced. Up to the false positive rate corresponding to the default value of $\alpha$=0$\cdot$05, the PC-simple algorithm is never significantly
outperformed by either the Lasso or Elastic Net.

We refer to the working paper ``Stability selection'' by N. Meinshausen and P. B\"uhlmann for a more principled way to choose the tuning parameter $\alpha$.
Further examples, with $p=1000,\ \peff =5,\ n=50$ and equi-correlated design
$\Sigma_{X;i,j}$ = 0$\cdot$5 for $i \neq j$ and $\Sigma_{X;i,i} = 1$ for all $i$,
are reported in \citet{pb08}.

%
%

The computing time of the PC-simple algorithm on 10 different values of
$\alpha$ has about the same order of magnitude as the Lasso or Elastic Net
for their whole solution paths. Hence, the
PC-simple algorithm is certainly feasible for high-dimensional problems.

\begin{figure}[!htp]
     \centering
     \subfigure[Low dimensional, $\rho=0$.]{
           \label{fig:p19r0}
          \includegraphics[width=.25\textwidth,angle=-90]{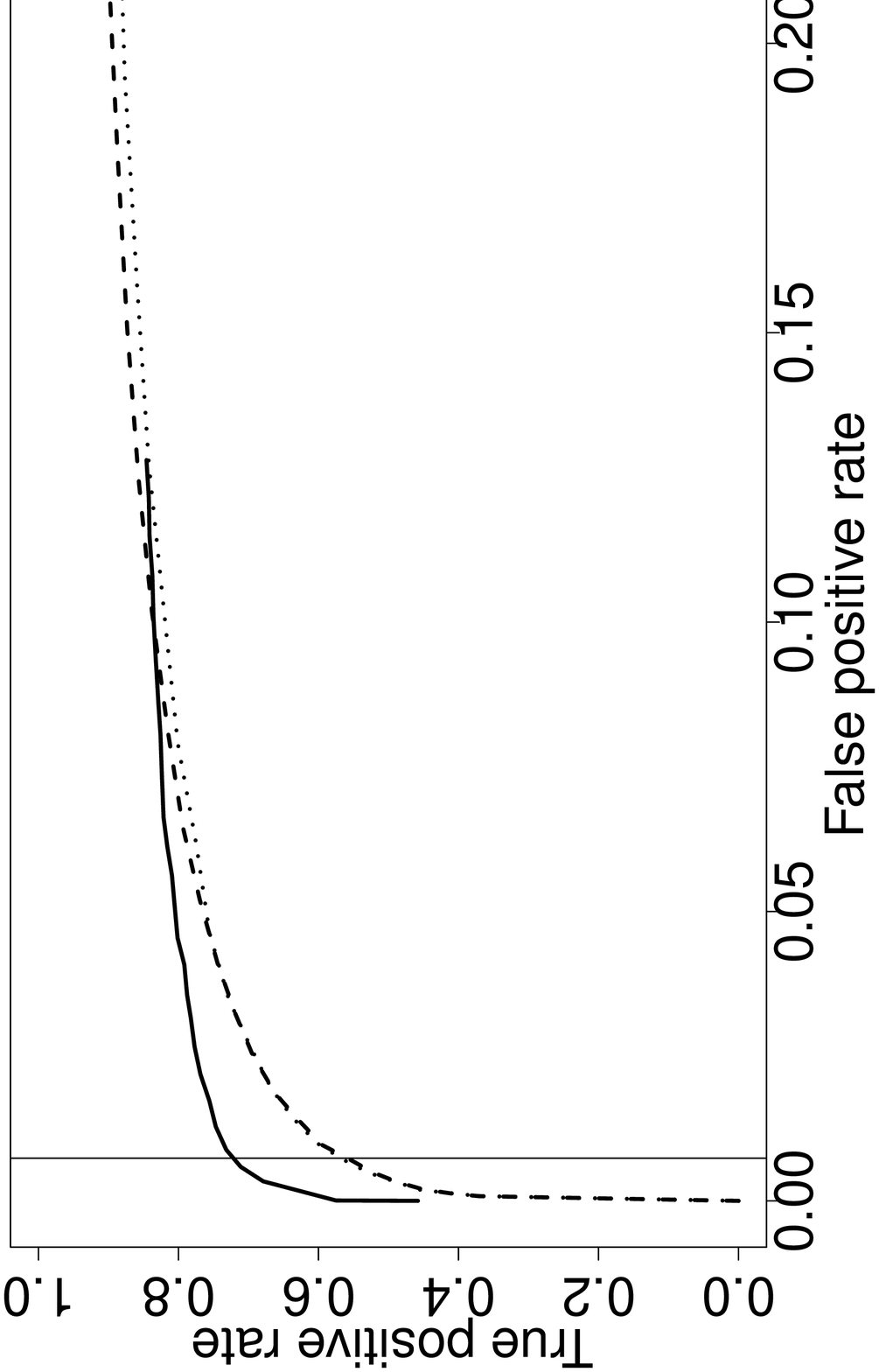}}
     \subfigure[High dimensional, $\rho=0$.]{
          \label{fig:p499r0} \includegraphics[width=.25\textwidth,angle=-90]{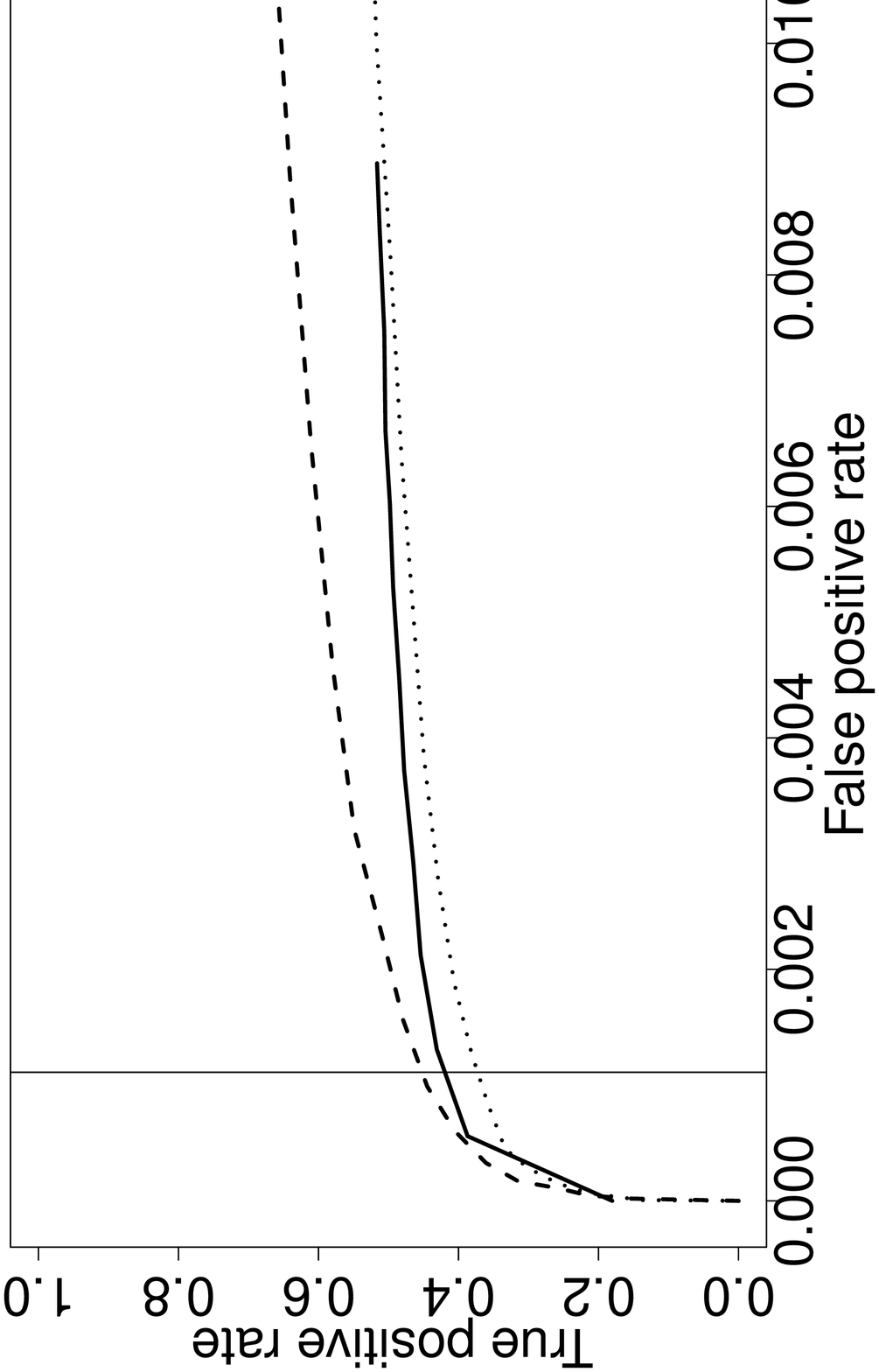}}
     \vspace{.1in}
     \subfigure[Low dimensional, $\rho$=0$\cdot$3.]{
           \label{fig:p19r03}
           \includegraphics[width=.25\textwidth,angle=-90]{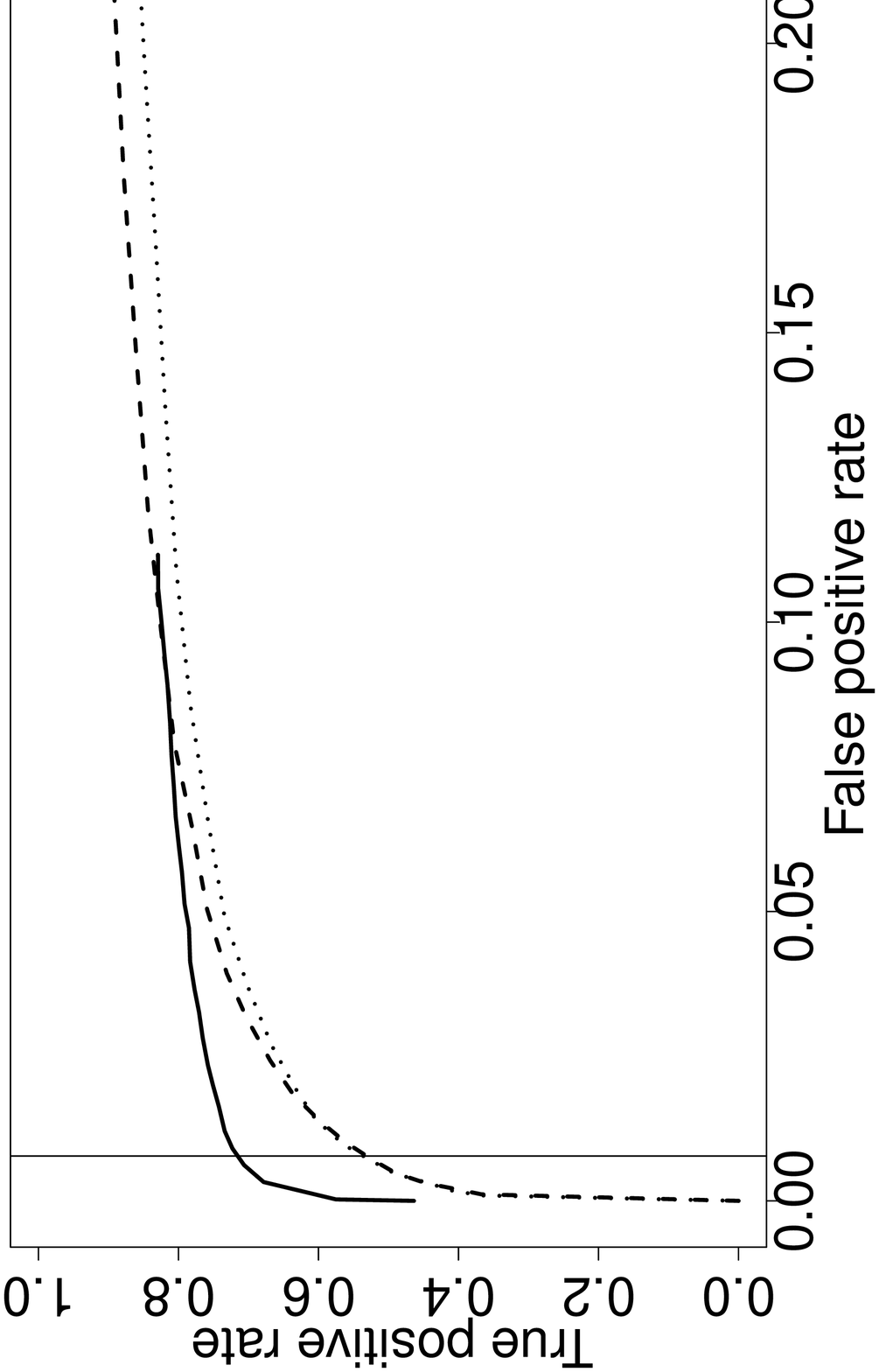}}
     \subfigure[High dimensional, $\rho$=0$\cdot$3.]{
           \label{fig:p499r03} \includegraphics[width=.25\textwidth,angle=-90]{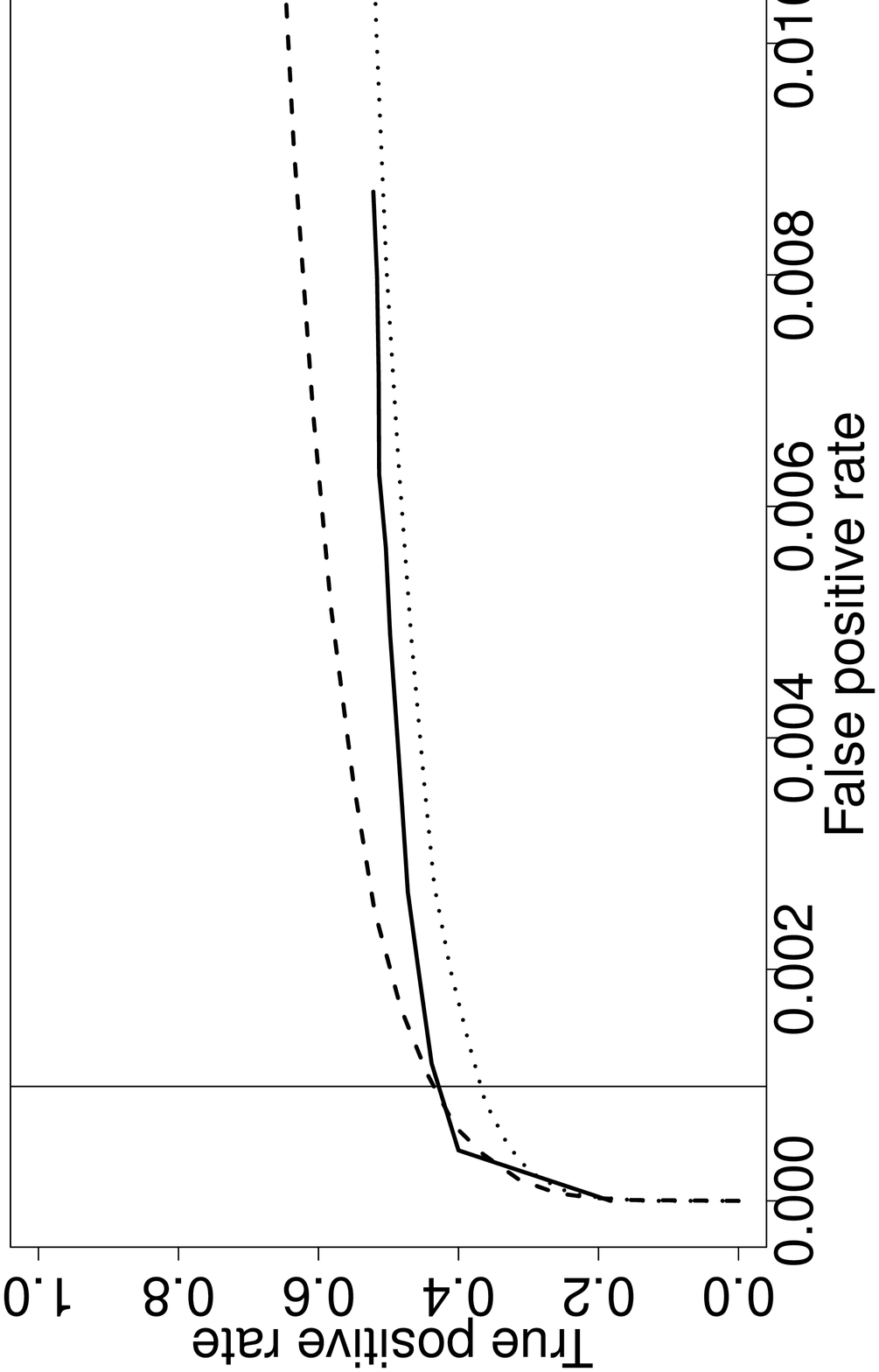}}
     \vspace{.1in}
     \subfigure[Low dimensional, $\rho$=0$\cdot$6.]{
           \label{fig:p19r06}
           \includegraphics[width=.25\textwidth,angle=-90]{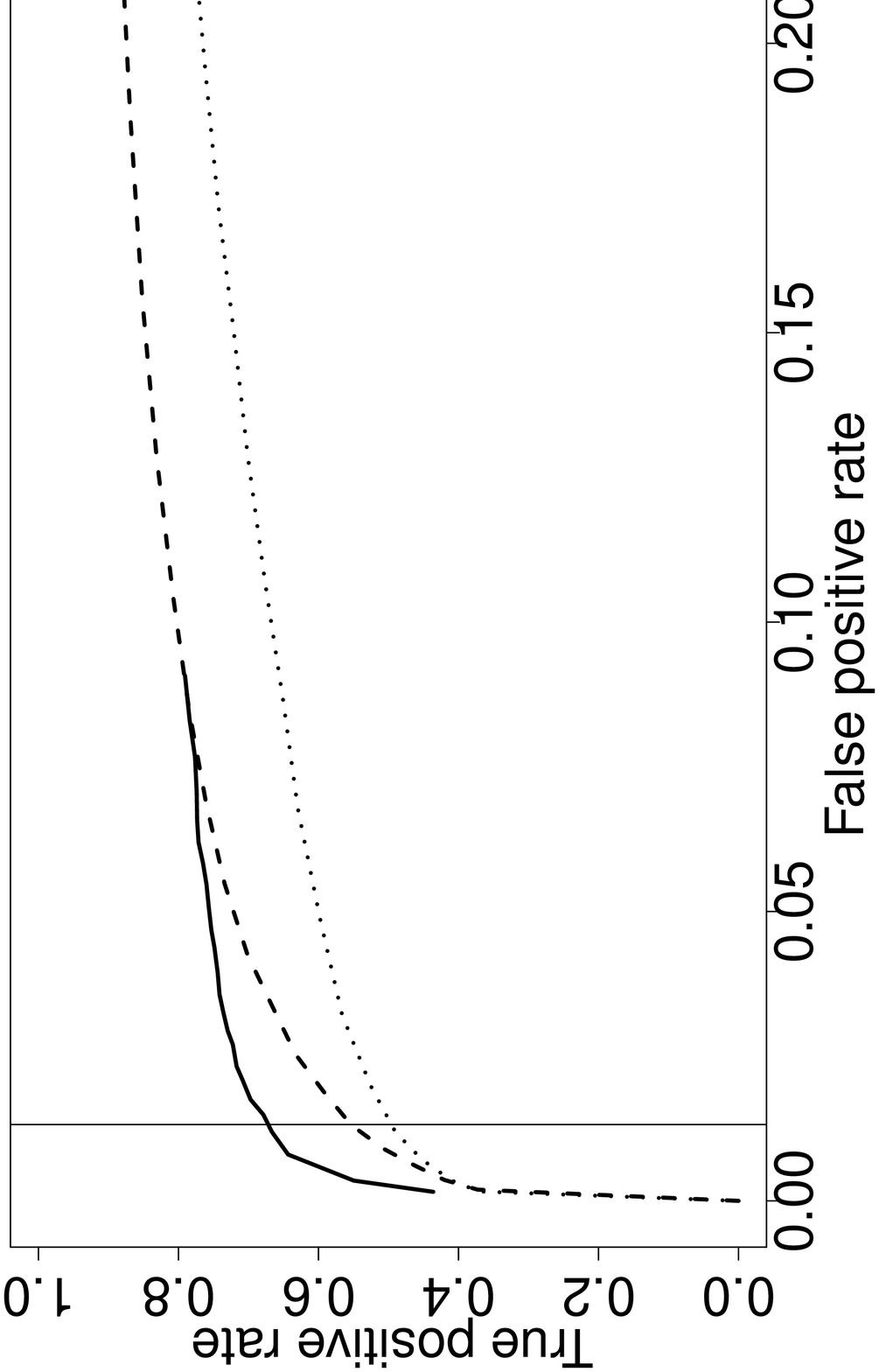}}
     \subfigure[High dimensional, $\rho$=0$\cdot$6.]{
           \label{fig:p499r06}
           \includegraphics[width=.25\textwidth,angle=-90]{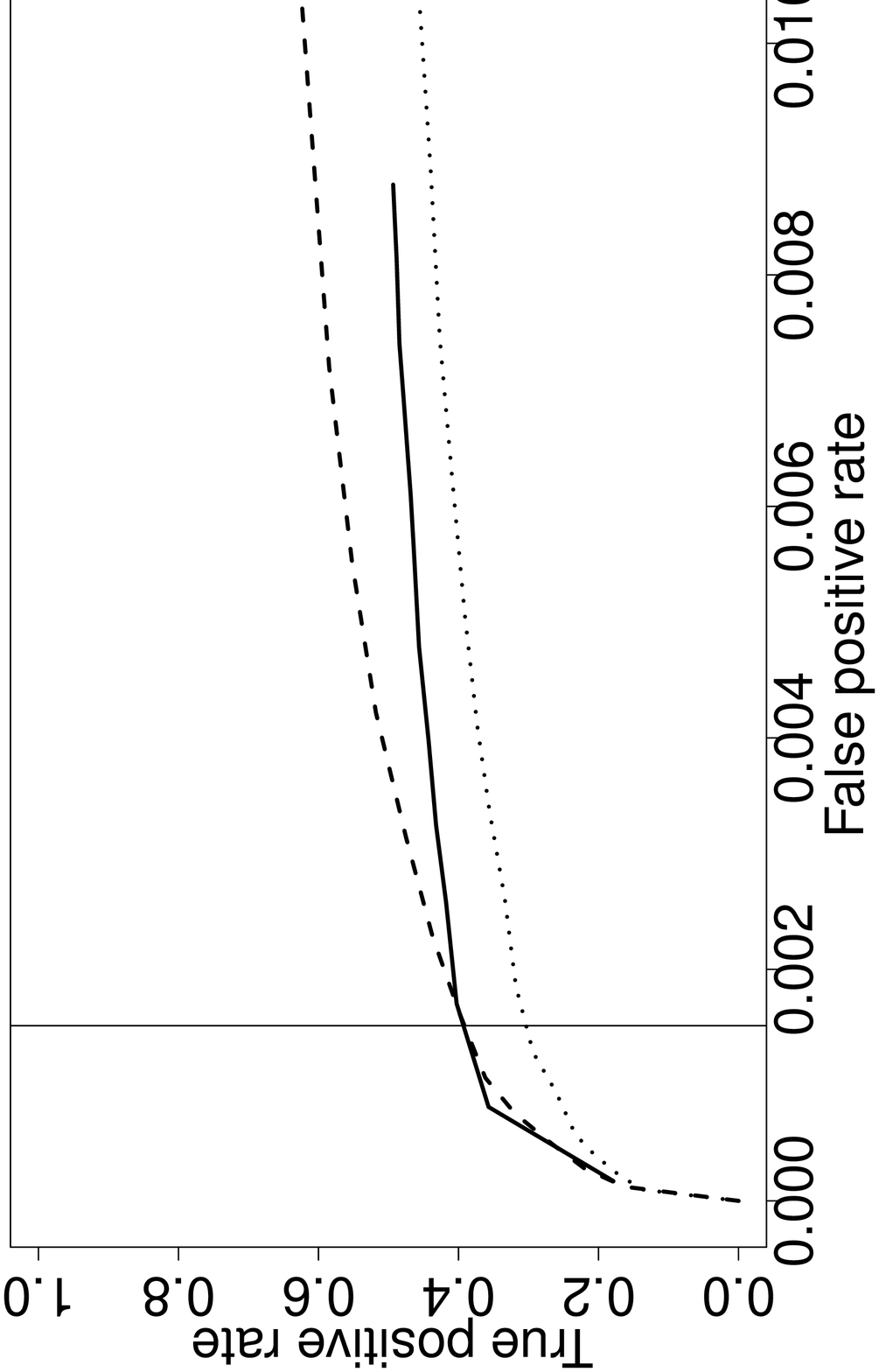}}
         \caption{Receiver operating characteristic curves for the
           simulation study in Section \ref{sec.ROC}. The solid line
           corresponds to the PC-simple algorithm, the dashed line to the
           Lasso and the dotted line to Elastic Net.  The solid vertical
           lines indicate the performance of the PC-simple algorithm using
           the default $\alpha$=0$\cdot$05.}
     \label{fig:multifig}
\end{figure}

\subsection{Prediction optimal tuned methods for simulated data}\label{sec.opt.pred.tuned}

We now compare the PC-simple algorithm to several existing methods when using prediction optimal tuning. It
is known that the prediction-optimal tuned Lasso overestimates the true
model \citep{mebu06}. The adaptive Lasso \citep{zou06} and the relaxed
Lasso \citep{me07} correct Lasso's overestimating
behavior and prediction-optimal tuning for these methods yields a good
amount of regularization for variable selection.

We simulate from a Gaussian linear model as in (\ref{mod.lin}) with $p =
1000,\ \peff = 20$, $n=100$, and
\begin{eqnarray*}
   & &\delta = 0,\ \mu_X = (0,\dots,0)^T,\ \Sigma_{X;i,j} = \textrm{0$\cdot$5}^{|i-j|},\ \sigma^2 = 1,\\
   & &\beta_1,\ldots ,\beta_{20}\ \mbox{i.i.d.}\ \sim {\cal N}(0,1),\ \
   \beta_{21} = \ldots = \beta_{1000} = 0,
\end{eqnarray*}
using 100 replicates. We consider the following performance measures:
\begin{align}\label{perf.meas}
  \begin{array}{lr}
    \|\hat{\beta} - \beta\|_2^2 = \sum_{j=1}^p (\hat{\beta}_j - \beta_j)^2 &
     (\mbox{MSE Coeff})\\
   \EE_X [ \{X^T (\hat{\beta} - \beta)\}^2] = (\hat{\beta} - \beta) \Cov(X)
     (\hat{\beta} - \beta)^T   & (\mbox{MSE Pred})\\
   \sum_{j=1}^p I(\hat{\beta}_j \neq 0, \beta_j \neq 0)/\sum_{j=1}^p
     I(\beta_j \neq 0) & (\mbox{true positive rate})\\
   \sum_{j=1}^p I(\hat{\beta}_j \neq 0, \beta_j = 0)/\sum_{j=1}^p
     I(\beta_j = 0) & (\mbox{false positive rate})
   \end{array}
\end{align}
where $I(\cdot)$ denotes the indicator function.

We apply the PC-simple algorithm for variable selection and then use the Lasso or the adaptive Lasso to
estimate the coefficients for the sub-model selected by the PC-simple
algorithm. We compare this procedure to the Lasso, the adaptive Lasso and the relaxed
Lasso. For simplicity, we do not show results for Elastic Net, since this method was found to
be worse in terms of receiver operating characteristic curves than the Lasso, see Section \ref{sec.ROC}.

Prediction optimal tuning is pursued with
a validation set having the same size as the training data. For the
adaptive Lasso, we first compute a prediction-optimal Lasso as initial
estimator $\hat{\beta}_{init}$, and
the adaptive Lasso is then computed by solving the following optimization
problem:
\begin{align*}
   \text{argmin}_{\beta\in \R^p} \left \{ \sum_{i=1}^n (Y_i-X_i^T\beta)^2 +
     \lambda \sum_{j=1}^p w_j^{-1} |\beta_j|\right \},
\end{align*}
where $w_j^{-1} = |\hat{\beta}_{init,j}|^{-1}$ and $\lambda$ is again
chosen in a prediction-optimal way. The computations are done with the
\texttt{R}-package \texttt{lars}, using re-scaled covariates for the
adaptive step. The relaxed Lasso is computed with the \texttt{R}-package
\texttt{relaxo}. The PC-simple algorithm with the Lasso for estimating
coefficients is computed using the \texttt{R}-packages \texttt{pcalg} and
\texttt{lars}, using optimal tuning with respect to the $\alpha$-parameter
for the PC-simple algorithm and the penalty parameter for Lasso. For the
PC-simple algorithm with the adaptive Lasso, we first compute weights $w_j$
as follows. If the variable has not been selected, we set $w_j = 0$. If the
variable has been selected, we let $w_j$ be the minimum value of the test
statistic $(n - 3 - |{\cal S}|)^{1/2} Z(Y,j \mid {\cal S})$ over all
iterations of the PC-simple algorithm. With these weights $w_j$, we then
compute the adaptive Lasso as defined above.

\begin{figure}[!htb]
  \centerline{
      \includegraphics[scale=0.4,angle=-90]{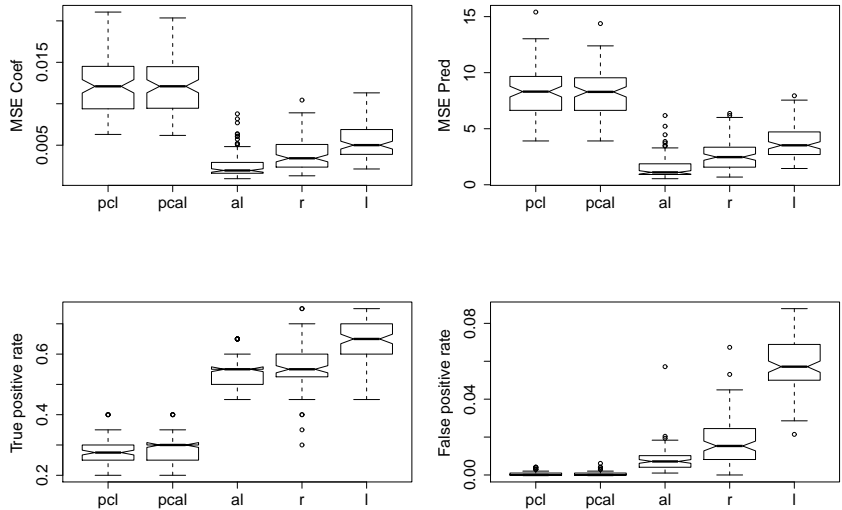}}
    \caption{Boxplots of performance measures for the simulation study in
      Section \ref{sec.opt.pred.tuned} considering the following prediction
      optimal tuned methods: the PC-simple algorithm with Lasso coefficient
      estimation (pcl), the PC-simple algorithm with adaptive Lasso (pcal),
      the adaptive Lasso (al), the relaxed Lasso (r) and the Lasso (l).}
  \label{fig:predopt}
\end{figure}

The results are shown in Figure \ref{fig:predopt}. As expected, the Lasso
yields too many false positives, while the adaptive Lasso and the relaxed
Lasso have much better variable selection properties. The PC-simple based
methods clearly have the lowest false positive rates, but pay a price in
terms of the true positive rate and mean squared errors. In many
applications, a low false positive rate is highly desirable even when
paying a price in terms of power. For example, in molecular biology where a
covariate represents a gene, only a limited number of selected genes
can be experimentally validated. Hence, methods with a low
false positive rate are preferred, in the hope that most of the
top-selected genes are relevant, as sketched in the next section.

\subsection{Real data: riboflavin production by Bacillus
  subtilis}\label{subsec.realdata}

We consider a high-dimensional real data set about vitamin riboflavin
production by the bacterium B. subtilis, kindly provided by DSM Nutritional
Products. There is a continuous response variable $Y$ which measures the
logarithm of the riboflavin production rate, and there are $p = 4088$
covariates corresponding to the logarithms of expression levels of
genes. The main goal is to genetically modify B. subtilis in order to
increase its riboflavin production rate. An important step to achieve this
goal is to find genes which are most relevant for the production rate.

We use data from a genetically homogeneous group of $n = 71$ individuals.
We run the PC-simple algorithm on the full data set for various values of $\alpha$.
Next, we compute the Lasso and Elastic Net, choosing
the tuning parameters such that they select the same number of variables as the
PC-simple algorithm.

Table \ref{tab:DSMdata} shows that there is overlap between the selected variables
of the three different methods. This overlap is highly significant when calibrating with a
null-distribution which consists of random noise. On the other hand,
we see that the variable selection results of
the Lasso and Elastic Net are more similar than the results of the PC-simple algorithm
and either of these methods. Hence, the
PC-simple algorithm seems to select genes in a different way than the penalty-based methods Lasso and Elastic Net.
This is a desirable finding, since for any large-scale problem, we want to
see different aspects of the problem by using different methods. Ideally,
results from different methods can then be combined to obtain results that are better
than what is achievable by a single procedure.

\begin{table}[htbp]
  \begin{center}
    \begin{tabular}{|l|c|c|c|c|}
      \hline
      $\alpha$ for PC-simple& Selected & PC-Lasso &
      PC-Enet & Lasso-Enet \\
      \hline
      0$\cdot$001 & 3 & 0 & 0 & 2  \\
      0$\cdot$01  & 4 & 2 & 1 & 3\\
      0$\cdot$05  & 5 & 2 & 1 & 3 \\
      0$\cdot$15 & 6  & 3 & 2 & 3\\
      \hline
    \end{tabular}
    \caption{Variable selection for a real data set on riboflavin
      production by B. subtilis. The columns 2 to 5 show the number of
      variables selected by the PC-simple algorithm, the number of variables
      selected by both the PC-simple algorithm and the Lasso,
      the number of variables selected
      by both the PC-simple algorithm and Elastic Net,
      and the number of variables that were selected by both the
      Lasso and Elastic Net.}
    \label{tab:DSMdata}
  \end{center}
\end{table}

\vspace*{-5mm}
\appendix
\section*{Appendix}\label{sec.proofs}
\subsection*{Proofs}

\begin{proof}[of Theorem~\ref{theorem.mod.lin.part.faithful}]
   Consider the linear model \eqref{mod.lin} satisfying assumptions (C1) and (C2). In order to prove that the partial faithfulness assumption holds almost surely, it suffices to show that the following holds for all $j\in\{1,\dots,p\}$ and $S\subseteq \{j\}^C$:
      $\beta_j \neq 0$  implies that $\rho(Y,X^{(j)} \mid X^{(\cal S)})
      \neq  0$ almost surely
   with respect to the distribution generating the $\beta_j$s.

   Thus, let $j\in \{1,\dots,p\}$ such that $\beta_j \neq 0$, and let $S
   \subseteq \{j\}^C$. We recall that $\rho(Y,X^{(j)} \mid X^{(\cal S)})
   =0$ if and only if the partial covariance $\parcov(Y, X^{(j)} \mid
   X^{(\cal S)})$ between $Y$ and $X^{(j)}$ given $X^{(\cal S)}$ equals
   zero as can be seen in \citet[page 37, definition
   2$\cdot$5$\cdot$2]{anderson84}. Partial covariances can be computed
   using the recursive formula given in \citet[page 43, equation
   (26)]{anderson84}. This formula shows that the partial covariance is
   linear in its arguments, and that
$\parcov(\epsilon,X^{(j)} \mid X^{(\cal S)})=0$ for all $j \in \{1,\ldots ,p\}$
and ${\cal S} \subseteq \{j\}^C$.
 Hence,
   \begin{align*}
      \parcov(Y, X^{(j)} \mid X^{(\cal S)}) & = \parcov(\delta +
      \sum_{r=1}^p \beta_r X^{(r)} + \epsilon, X^{(j)}  \mid  X^{(\cal S)}) \\
      & = \sum_{r=1}^p \beta_r \parcov(X^{(r)}, X^{(j)} \mid X^{(\cal S)})
      \\
& = \beta_j \parcov(X^{(j)},X^{(j)} \mid X^{(\cal S)}) + \sum_{r=1, r\neq j}^p \beta_r \parcov(X^{(r)},X^{(j)} \mid X^{(\cal S)}).
   \end{align*}
   Since $\beta_j\neq 0$ by assumption, and since $\parcov(X^{(j)},X^{(j)} \mid X^{(\cal S)}) \neq 0$ by assumption (C1), the only way for $\parcov(Y,X^{(j)} \mid X^{(\cal S)})$ to equal zero is if there is a special parameter configuration of the $\beta_r$s, such that
   \begin{align}
       \sum_{r=1, r\neq j}^p \beta_r \parcov(X^{(r)},X^{(j)} \mid X^{(\cal S)}) = - \beta_j \parcov(X^{(j)},X^{(j)} \mid X^{(\cal S)}).
       \label{eq.parameter.cancellation}
   \end{align}
But such a parameter constellation has Lebesgue measure zero under
assumption (C2).
\end{proof}

\begin{proof}[of Corollary \ref{corollary.all.par.corr.neq.0}]
   The implication from the left to the right hand side follows from the
   fact that $\beta_j\neq 0$ in the linear model \eqref{mod.lin} if and
   only if
   $\rho(Y, X^{(j)} \mid X^{(\{j\}^C)}) \neq 0$. The other direction follows from the definition of partial
   faithfulness, by
   taking the negative of expression \eqref{eq: def partial faithfulness for mod.lin}.
\end{proof}

\begin{proof}[of Theorem \ref{theorem: faithfulness and part faithfulness}]
   Suppose that $(X,Y) = (X^{(1)},\dots,X^{(p)},Y)$ is linearly $Y$-faithful to a directed acyclic graph
   $G$ in which $Y$ is childless, i.e., any edges between $Y$ and the $X^{(j)}$s, $j=1,\dots,p$,
   point towards $Y$. We will show that this implies that the
   distribution of $(X,Y)$ is partially
   faithful, by showing that $\rho(Y,X^{(j)} \mid X^{(\{j\}^C)})\neq 0$ implies that $\rho(Y,X^{(j)} \mid X^{(\cal S)})\neq 0$ for all ${\cal S}\subseteq \{j\}^C$.

   Thus, let $j\in \{1,\dots,p\}$ such that $\rho(Y,X^{(j)} \mid X^{(\{j\}^C)})\neq 0$. By linear $Y$-faithfulness, this implies that $Y$ and $X^{(j)}$ are not d-separated by $X^{(\{j\}^C)}$ in $G$, meaning that $X^{( \{j\}^C)}$ does not block all d-connecting paths between $X^{(j)}$ and $Y$. All paths between $X^{(j)}$ and $Y$ must be of the form $X^{(j)} - \dots - \dots \dots - X^{(r)} \rightarrow Y$, where $-$ denotes an edge of the form $\leftarrow$ or $\rightarrow$. First suppose that $r\neq j$. Then, because
   $X^{(r)}$ cannot be a collider on the given path (since we know that the edge from $X^{(r)}$ to $Y$ points towards $Y$), the path is blocked by $X^{(r)}\in X^{(\{j\}^C)}$, and hence the path is blocked by $X^{(\{j\}^C)}$. Thus, since $X^{( \{j\}^C)}$ does not block all paths between $X^{(j)}$ and $Y$, there must be a path where $r=j$, or in other words, there must be an edge between $X^{(j)}$ and $Y$: $X^{(j)} \to Y$. Such a path $X^{(j)} \to Y$ cannot be blocked by any set $X^{(\cal S)}$, ${\cal S} \subseteq \{j\}^C$. Hence, there does not exist a set ${\cal S}$ that d-separates $X^{(j)}$ and $Y$. By linear $Y$-faithfulness, this implies that $\rho(X^{(j)},Y \mid X^{({\cal S})})\neq 0$ for all ${\cal S} \subseteq \{j\}^C$.
\end{proof}

\begin{proof}[of Theorem \ref{theorem.correctness.pc.simple.pop}]
   By partial faithfulness and equation \eqref{screen-models}, ${\cal A} \subseteq {\cal A}^{[m_{\reach}]}$. Hence, we only need to show that ${\cal A}$ is not a strict subset of ${\cal A}^{[m_{\reach}]}$.
   We do this using contra-position. Thus, suppose that ${\cal A} \subset {\cal A}^{[m_{\reach}]}$ strictly.
   Then there exists a $j \in {\cal A}^{[m_{\reach}]}$ such that $j\notin
   {\cal A}$. Fix such an index $j$. Since $j\in {\cal A}^{[m_{\reach}]}$,
   we know that
   \begin{align}\label{pc-tested0}
     \rho(Y,X^{(j)} \mid X^{({\cal S})}) \neq 0\ \mbox{for all}\ {\cal S}
       \subseteq {\cal A}^{[m_{\reach}-1]} \setminus \{j\}
       \quad \text{with} \quad |{\cal S}| \le m_{\reach}-1.
   \end{align}
   This statement for sets $\cal S$ with $|{\cal S}|=m_{\reach}-1$ follows from the definition of iteration $m_{\reach}$ of the PC-simple algorithm. Sets $\cal S$ with lower cardinality are considered in previous iterations of the algorithm, and since $A^{[1]} \supseteq {\cal A}^{[2]} \supseteq \ldots$, all subsets ${\cal S} \subseteq {\cal A}^{[m_{\reach} -1]}$ with $|{\cal S}| \le m_{\reach} - 1$ are considered.

   We now show that we can take ${\cal S} = {\cal A}$ in
     (\ref{pc-tested0}). First, the supposition ${\cal A} \subset {\cal A}^{[m_{\reach}]}$ and our choice of $j$ imply that
   $${\cal A} \subseteq {\cal A}^{[m_{\reach}]} \setminus \{j\} \subseteq {\cal A}^{[m_{\reach}-1]} \setminus \{j\}.$$
   Moreover, ${\cal A} \subset {\cal A}^{[m_{\reach}]}$ implies that
   ${|\cal A|} \le  |{\cal A}^{[m_{\reach}]}| -1 $. Combining this with
   $|{\cal A}^{[m_{\reach}]}| \le m_{\reach}$ yields that ${|\cal A|} \le
   m_{\reach}-1$. Hence, we can indeed take $\cal S = \cal A$ in
   \eqref{pc-tested0}, yielding that $\rho(Y,X^{(j)} \mid X^{(\cal A)}) \neq
   0$.

   On the other hand, $j\notin \cal A$ implies that $\beta_j=0$, and hence
   $\rho(Y,X^{(j)} \mid X^{(\cal A)})=0$. This is a contradiction, and hence
   $\cal A$ cannot be a strict subset of ${\cal
     A}^{[m_{\reach}]}$.
\end{proof}

\begin{proof}[of Theorem \ref{theorem.cons.pc.simple}]
  A first main step is to show that the population version of the PC-simple
  algorithm infers the true underlying active set ${\cal A}_n$, assuming
  partial faithfulness. We formulated this step as a separate result in Theorem
  \ref{theorem.correctness.pc.simple.pop}.

  The arguments for controlling the estimation error due to a finite sample
  size are similar to the ones used in the proof of Theorem 1 in
  \citet{kabu07}. We proceed in two steps, analyzing first partial
  correlations and then the PC-simple algorithm.

\medskip
   We show an exponential inequality for
   estimating partial correlations up to order $m_n = o(n)$. We use the
   following notation: $K_{j}^{m_n} = \{{\cal S} \subseteq \{0,\ldots
   ,p_n\} \setminus \{j\}; |{\cal S}| \le m_n\}$ ($j=1,\ldots ,p_n$). We
   require more general versions of assumptions (D4) and (D5) where the
   cardinality of the condition sets are bounded by the number $m_n$:
\begin{enumerate}
\item[(D4$_{m_n}$)] The partial correlations $\rho_n(Y,j \mid {\cal S}) =
  \rho(Y,X^{(j)} \mid X^{({\cal S})})$ satisfy:
\begin{eqnarray*}
  & &\inf\Big\{|\rho_n(Y,j \mid {\cal S})|;\ j=1,\ldots ,p_n,\ {\cal S}
  \subseteq \{j\}^C,\ |{\cal S}| \le m_n\ \mbox{with}\
  \rho_n(Y,j \mid {\cal S}) \neq 0\Big\} \ge c_n,
\end{eqnarray*}
where $c_n^{-1} = O(n^{d})$ for some $0 \le d < b/2$, and $b$ is as in
(D3).
\item[(D5$_{m_n}$)] The partial correlations $\rho_n(Y,j \mid {\cal S})$ and
  $\rho_n(i,j \mid {\cal S}) = \rho(X^{(i)},X^{(j)} \mid X^{({\cal S})})$ satisfy:
\begin{eqnarray*}
(i)\ \sup_{n,j,{\cal S} \subseteq \{j\}^C,|{\cal S}| \le m_n} |\rho_n(Y,j \mid {\cal S})| \le M < 1,\ \ (ii)\ \sup_{n,i\neq j,{\cal S} \subseteq
  \{i,j\}^C,|{\cal S}| \le m_n} |\rho_n(i,j \mid {\cal S})| \le M < 1.
\end{eqnarray*}
\end{enumerate}
We will later see in Lemma
\ref{lemma:mreach} that we need $m_n \le \peff_n$ only, and hence,
assumptions (D4$_{m_n}$) and (D5$_{m_n}$) coincide with (D4) and (D5),
respectively.

We have,
   for $m_n < n-4$ and
   $0<\gamma<2$,
   \begin{eqnarray*}
     \sup_{{\cal S} \in K_{j}^{m_n},j=1,\ldots ,p_n}
   \PP\{|\hat{\rho}_n(Y,j \mid {\cal S}) - \rho_n(Y,j \mid {\cal S})| >
   \gamma\}\nonumber
    \le  C_1 n
   \exp(n - m_n - 4) \log\left(\frac{4 - \gamma^2}{4+\gamma^2}\right),
   \end{eqnarray*}
   where $0 < C_1 < \infty$ depends on $M$ in (D5$_{m_n}$) only. This bound
   appears
   in \citet[Corollary 1]{kabu07}: for proving it, we require the Gaussian
   assumption for the distribution and (D5$_{m_n}$). It is now straightforward to
   derive an exponential inequality for the estimated $Z$-transformed
   partial correlations. We define $Z_{n}(Y,j \mid {\cal S}) = g\{\hat
   \rho_n(Y,j \mid {\cal S})\}$ and $z_{n}(Y,j \mid {\cal S}) =
   g\{\rho_{n}(Y,j \mid {\cal S})\}$, where $g(\rho) = \frac{1}{2}
   \log\{(1+\rho)/(1-\rho)\}$.
\begin{lemma}\label{lemma:3a}
Suppose that the Gaussian assumption from (D1) and condition (D5$_{m_n}$) hold.
Define $L = 1/\{1 - (1+M)^2/4\}$, with $M$ as in assumption (D5$_{m_n}$).
Then, for $m_n<n-4$ and $0<\gamma<2L$,
\begin{eqnarray*}
& &\sup_{{\cal S} \in K_{j}^{m_n}, j=1,\ldots ,p_n}
\PP\{|Z_n(Y,j \mid {\cal S})-z_n(Y,j \mid {\cal S})|>\gamma\} \\
&\le& O(n) \left(\exp \left [(n-4 -m_n)
  \log \left \{\frac{4-(\gamma/L)^2}{4+(\gamma/L)^2} \right \} \right] + \exp\{-C_2(n-m_n)\}\right)
\end{eqnarray*}
for some constant $C_2>0$.
\end{lemma}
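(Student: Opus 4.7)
The plan is to reduce the tail bound on the Fisher-transformed difference $|Z_n(Y,j\mid \mathcal{S}) - z_n(Y,j\mid \mathcal{S})|$ to a tail bound on the raw difference $|\hat\rho_n(Y,j\mid \mathcal{S}) - \rho_n(Y,j\mid \mathcal{S})|$, to which I can apply the Kalisch--B\"uhlmann exponential inequality quoted just above the lemma. The difficulty is that $g(\rho) = \tfrac12\log\{(1+\rho)/(1-\rho)\}$ has derivative $g'(\rho) = 1/(1-\rho^2)$, which is unbounded as $|\rho|\uparrow 1$, so a global Lipschitz argument is not available.

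To get around this, I would introduce the cutoff $a = (1+M)/2$ and split the event according to whether $|\hat\rho_n(Y,j\mid \mathcal{S})| \le a$ or not:
\begin{equation*}
\PP\{|Z_n-z_n|>\gamma\} \le \PP\{|\hat\rho_n|>a\} + \PP\{|Z_n-z_n|>\gamma,\ |\hat\rho_n|\le a\}.
\end{equation*}
On the good event $\{|\hat\rho_n|\le a\}$, both $\hat\rho_n$ and $\rho_n$ lie in $[-a,a]$ by (D5$_{m_n}$), so the mean value theorem gives
\begin{equation*}
|Z_n - z_n| \le \sup_{|t|\le a}|g'(t)|\cdot |\hat\rho_n-\rho_n| = \frac{1}{1-a^2}|\hat\rho_n-\rho_n| = L|\hat\rho_n-\rho_n|,
\end{equation*}
so the second term on the right is bounded by $\PP\{|\hat\rho_n - \rho_n|>\gamma/L\}$. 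For the first term, since $|\rho_n|\le M < a$, $\{|\hat\rho_n|>a\} \subseteq \{|\hat\rho_n-\rho_n| > a - M\} = \{|\hat\rho_n-\rho_n|>(1-M)/2\}$.

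Having reduced both pieces to tail bounds for the raw sample partial correlation, I would now invoke the Kalisch--B\"uhlmann inequality (stated immediately before the lemma). For the first piece, applied with the constant $(1-M)/2 \in (0,2)$, it yields $O(n)\exp\{-C_2(n-m_n)\}$ for some $C_2>0$ depending only on $M$, since $\log\{(4-((1-M)/2)^2)/(4+((1-M)/2)^2)\}$ is a strictly negative constant. For the second piece, applied with $\gamma/L\in(0,2)$ (this is precisely why the lemma restricts $\gamma<2L$), the bound is
\begin{equation*}
O(n)\exp\!\left[(n-4-m_n)\log\!\left\{\frac{4-(\gamma/L)^2}{4+(\gamma/L)^2}\right\}\right].
\end{equation*}
Summing these two contributions and noting that the sup over $\mathcal{S}\in K_j^{m_n}$ and $j=1,\dots,p_n$ is already absorbed into the Kalisch--B\"uhlmann bound gives exactly the displayed expression.

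The main obstacle is the unboundedness of $g'$ near $\pm 1$; everything else is standard. The cutoff at $a=(1+M)/2$ is the crucial design choice, because (D5$_{m_n}$) pushes $|\rho_n|$ below $M$ uniformly in $n,j,\mathcal{S}$, leaving a fixed gap $a-M = (1-M)/2$ for the deviation of $\hat\rho_n$ from $\rho_n$ to cross on the bad event. This gap is what produces the extra exponential term $\exp\{-C_2(n-m_n)\}$ and lets us pay a purely geometric price for linearising $g$.
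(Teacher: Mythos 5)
Your argument is correct and is essentially the proof of Lemma 3 in Kalisch \& B\"uhlmann (2007), which this paper cites without reproducing: the truncation at $a=(1+M)/2$, the local Lipschitz bound $|g(\hat\rho)-g(\rho)|\le L|\hat\rho-\rho|$ on $\{|\hat\rho|\le a\}$, and the reduction of $\PP\{|\hat\rho|>a\}$ to $\PP\{|\hat\rho-\rho|>(1-M)/2\}$ via $|\rho|\le M$ are exactly the steps used there, and each application of the quoted exponential inequality (with arguments $\gamma/L<2$ and $(1-M)/2<2$ respectively) produces the two terms in the displayed bound. Nothing is missing.
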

We omit the proof since this is Lemma 3 in \citet{kabu07}.

\medskip
We now consider a version of the PC-simple algorithm that stops after a
fixed number of $m$ iterations. If $m \ge \hat{m}_{\reach}$, where $\hat
m_{\reach}$ is the estimation analogue of (\ref{reach}), we set
$\widehat{\cal A}^{[m]} = \widehat{A}^{[\hat{m}_{\reach}]}$. We denote this
version by PC-simple($m$) and the resulting estimate by $\widehat{\cal
  A}(\alpha,m)$.

\begin{lemma}\label{lemma:Gskel}
Assume (D1)-(D3), (D4$_{m_n}$) and (D5$_{m_n}$).
Then, for $m_n$ satisfying $m_n \ge
m_{\reach,n}$ and $m_n = O(n^{1-b})$ with $b$ as in (D3), there
exists a sequence $\alpha_n \to 0$ such that
\begin{eqnarray*}
\PP\{\widehat{\cal A}_n(\alpha_n,m_n) = {\cal A}_n\} = 1 - O\{\exp(-Cn^{1 -
  2d})\} \to 1\ (n \to \infty)\ \mbox{for some}\ C >0.
\end{eqnarray*}
\end{lemma}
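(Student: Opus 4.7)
The plan is to bootstrap Theorem \ref{theorem.correctness.pc.simple.pop} to the finite-sample regime via uniform control on the partial correlation tests used by the algorithm. My strategy is: (i) choose $\alpha_n$ so that the rejection threshold on $|Z_n(Y,j\mid {\cal S})|$ is of order $c_n$; (ii) argue inductively that if every test of order at most $m_n$ returns the correct decision, the sample PC-simple algorithm tracks the population version and therefore recovers ${\cal A}_n$ by the hypothesis $m_n \ge m_{\reach,n}$; (iii) bound the probability of any incorrect test via Lemma \ref{lemma:3a} combined with a union bound over the $O(p_n^{m_n+1})$ candidate tests.

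For (i), I would take $\alpha_n = 2\{1 - \Phi(n^{1/2}c_n/2)\}$, so that the threshold $\Phi^{-1}(1-\alpha_n/2)/(n-|{\cal S}|-3)^{1/2}$ on $|Z_n|$ is of order $c_n/2$ uniformly in $|{\cal S}| \le m_n = o(n)$. The Fisher transform $g$ satisfies $|g(\rho)| \ge |\rho|$, so (D4$_{m_n}$) guarantees that $|z_n(Y,j\mid {\cal S})| \ge c_n$ whenever $\rho_n(Y,j\mid {\cal S}) \ne 0$. A triangle-inequality argument then shows that a test at $(j,{\cal S})$ can return a wrong decision only if $|Z_n(Y,j\mid {\cal S}) - z_n(Y,j\mid {\cal S})| > c_n/4$, whether the truth is null or alternative.

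For (iii), I would apply Lemma \ref{lemma:3a} with $\gamma_n = c_n/4$. Since $\gamma_n \to 0$, the expansion $\log\{(4-(\gamma_n/L)^2)/(4+(\gamma_n/L)^2)\} = -\gamma_n^2/(2L^2) + O(\gamma_n^4)$ yields a single-test error bound of $O(n)\exp\{-C_3(n - m_n)c_n^2\}$, which is $O(n)\exp(-C_4 n^{1-2d})$ because $m_n = O(n^{1-b})$ with $b > 2d$ and $c_n^{-1} = O(n^d)$. The number of relevant triples $(j, {\cal S})$ with $j \in \{1,\ldots,p_n\}$, ${\cal S} \subseteq \{j\}^C$, $|{\cal S}| \le m_n$, is at most $p_n \binom{p_n}{m_n} \le \exp\{(m_n+1)\log p_n\}$; because $p_n = O(n^a)$ and $m_n \log p_n = O(n^{1-b}\log n) = o(n^{1-2d})$, the union bound gives the claimed $O\{\exp(-Cn^{1-2d})\}$ rate.

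The step requiring most care is (ii), the induction that the sample and population algorithms visit exactly the same conditioning sets on the good event. Because $\widehat{\cal A}^{[m]}$ is defined recursively in terms of $\widehat{\cal A}^{[m-1]}$, the collection of tests queried at step $m$ depends on previous random outcomes. I would handle this by defining the good event as correctness of the test for every $(j,{\cal S})$ with $j \in \{1,\ldots,p_n\}$, ${\cal S} \subseteq \{j\}^C$ and $|{\cal S}| \le m_n$; this is large enough to decouple the argument from the realised random trajectory. On this event, a straightforward induction on $m$ yields $\widehat{\cal A}^{[m]} = {\cal A}^{[m]}$ for all $m \le m_n$, so in particular $\hat m_{\reach} = m_{\reach,n}$, and combined with $m_n \ge m_{\reach,n}$ and Theorem \ref{theorem.correctness.pc.simple.pop}, this yields $\widehat{\cal A}_n(\alpha_n,m_n) = {\cal A}_n$ on the good event, completing the proof.
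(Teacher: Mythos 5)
Your proposal is correct and follows essentially the same route as the paper's proof: the same choice $\alpha_n = 2\{1-\Phi(n^{1/2}c_n/2)\}$, the same decomposition into type I and type II test errors controlled via Lemma \ref{lemma:3a} and the lower bound $|g(\rho)|\ge|\rho|$ from (D4$_{m_n}$), and the same union bound over the $O(p_n^{m_n+1})$ candidate pairs $(j,{\cal S})$ with the exponent $n^{1-2d}$ dominating $m_n\log p_n$. If anything, your step (ii) makes explicit the induction showing that correctness of all tests of order at most $m_n$ forces $\widehat{\cal A}^{[m]}={\cal A}^{[m]}$ for every $m$, a coupling the paper only asserts implicitly.
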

A concrete choice of $\alpha_n$ is $\alpha_n = 2\{1 - \Phi(n^{1/2} c_n/2)\}$, where $c_n$ is the lower
bound from (D4$_{m_n}$), which is typically unknown.
\begin{proof}
  Obviously, the population version of the PC-simple($m_n$) algorithm is
  correct for $m_n \ge m_{\reach,n}$, see Theorem
  \ref{theorem.correctness.pc.simple.pop}. An error can occur in the
  PC-simple($m_n$) algorithm if there exists a covariate $X^{(j)}$ and a
  conditioning set ${\cal S} \in K_{j}^{m_n}$ for which an error event $E_{j
    \mid {\cal S}}$ occurs, where $E_{j \mid {\cal S}}$ denotes the event that an
  error occurred when testing $\rho_n(Y,j \mid {\cal S}) = 0$. Thus,
\begin{eqnarray}\label{err-pc}
& &\PP\{\mbox{an error occurs in the PC-simple($m_n$)-algorithm}\}\nonumber \\
&\le&
\PP\bigg(\bigcup_{{\cal S} \in K_{j}^{m_n},j=1,\ldots ,p_n} E_{j \mid {\cal S}}\bigg)
\le O(p_n^{m_n+1}) \sup_{{\cal S} \in K_{j}^{m_n},j}
\PP(E_{j \mid {\cal S}}),
\end{eqnarray}
using that the cardinality of the index set $\{{\cal S} \in
K_{j}^{m_n},j=1,\ldots ,p_n\}$ in the union is bounded by $O(p_n^{m_n+1})$.
Now
\begin{eqnarray}\label{ADD1}
E_{j \mid {\cal S}} = E_{j \mid {\cal S}}^I \cup E_{j \mid {\cal S}}^{II},
\end{eqnarray}
where
\begin{eqnarray*}
& &\mbox{type I error}\ E_{j \mid {\cal S}}^I:\ (n - |{\cal S}| - 3)^{1/2}
   |Z_{n}(Y,j \mid {\cal S})| > \Phi^{-1} (1- \alpha/2)\ \mbox{and}\
   z_{n}(Y,j \mid {\cal S}) =0,\\
& &\mbox{type II error}\ E_{j \mid {\cal S}}^{II}:\ (n - |{\cal S}| - 3)^{1/2}
   |Z_{n}(Y,j \mid {\cal S})| \le \Phi^{-1}
   (1- \alpha/2)\ \mbox{and}\ z_{n}(Y,j \mid {\cal S}) \neq 0.
\end{eqnarray*}
Choose $\alpha = \alpha_n = 2\{1 - \Phi(n^{1/2} c_n/2)\}$, where $c_n$ is from
(D4$_{m_n}$). Then,
\begin{eqnarray}\label{ADD2}
\sup_{{\cal S} \in K_{j}^{m_n},j=1,\ldots ,p_n}  \PP(E_{j \mid {\cal S}}^I) & = &
\sup_{{\cal S} \in K_{j}^{m_n},j}  \PP\left[|Z_{n}(Y,j \mid {\cal S}) -
z_{n}(Y,j \mid {\cal S})| > \{n/(n - |{\cal S}| - 3)\}^{1/2} c_n/2\right]\nonumber\\
&\le & O(n) \exp\{-C_3 (n - m_n) c_n^2\},
\end{eqnarray}
for some $C_3>0$,
 using Lemma \ref{lemma:3a} and the fact that
$\log\{(4-\delta^2)/(4+\delta^2)\} \le -\delta^2/2$ for $0<\delta<2$.
Furthermore, with the choice of $\alpha = \alpha_n$ above,
\begin{eqnarray*}
& &\sup_{{\cal S} \in K_{j}^{m_n},j=1,\ldots ,p_n}  \PP(E_{j \mid {\cal S}}^{II}) =
\sup_{{\cal S} \in K_{j}^{m_n}, j } \PP\left[|Z_{n}(Y,j \mid {\cal S})| \le \{n/(n
- |{\cal S}| - 3)\}^{1/2} c_n/2\right] \\
&\le& \sup_{{\cal S} \in K_{j}^{m_n},j} \PP\left(|Z_{n}(Y,j \mid {\cal S}) -
z_{n}(Y,j \mid {\cal S})| > c_n[1 - \{n/(n - |{\cal S}| - 3)\}^{1/2}/2]\right),
\end{eqnarray*}
because $\inf_{{\cal S} \in K_{j}^{m_n},j}\{|z_{n}(Y,j \mid {\cal S})|;\
z_n(Y,j \mid {\cal S}) \neq 0\} \ge
c_n$ since $|g(\rho)| = |\frac{1}{2} \log\{(1+\rho)/(1-\rho)\}| \ge
|\rho|$ for all $\rho$ and using assumption
(D4$_{m_n}$). This shows the crucial role of assumption (D4$_{m_n}$) in
controlling the type II
error. By invoking Lemma \ref{lemma:3a} we then obtain:
\begin{eqnarray}\label{ADD3}
\sup_{{\cal S} \in K_{j}^{m_n},j}  \PP(E_{j \mid {\cal S}}^{II}) \le
O(n) \exp\{-C_4 (n - m_n) c_n^2\}
\end{eqnarray}
for some $C_4 >0$.
Now, by (\ref{err-pc})-(\ref{ADD3}) we get
\begin{eqnarray*}
& &\PP\{\mbox{an error occurs in the PC-simple($m_n$)-algorithm}\}\\
&\le& O[p_n^{m_n+1} n \exp\{-C_5 (n - m_n) c_n^2\}] \le O[n^{a (m_n+1) +1}
\exp\{-C_5 (n - m_n) n^{-2d}\}] \\
&= &O\left[\exp \left\{a(m_n+1) \log(n) + \log(n) -
C_5 (n^{1 - 2d} - m_n n^{-2d})\right\}\right] = o(1),
\end{eqnarray*}
because $n^{1 - 2d}$ dominates all other terms in the argument of the
$\exp$-function, due to $m_n=O(n^{1-b})$ and the assumption in (D4$_{m_n}$)
that $d<b/2$. This completes
the proof.
\end{proof}

\medskip
Lemma \ref{lemma:Gskel} leaves some flexibility for choosing $m_n$. The
PC-algorithm yields a data-dependent stopping level
$\hat{m}_{\reach,n}$, that is, the sample version of (\ref{reach}).

\begin{lemma}\label{lemma:mreach}
Assume (D1)-(D5). Then,
\begin{eqnarray*}
& &\PP(\hat{m}_{\reach,n} = m_{\reach,n}) = 1 - O\{\exp(-Cn^{1-2d})\} \to 1\
(n \to \infty)
\end{eqnarray*}
for some $C >0$, with $d$ is as in (D4).
\end{lemma}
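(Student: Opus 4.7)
The plan is to leverage Lemma \ref{lemma:Gskel} by observing that the event under which the sample PC-simple($m_n$) algorithm recovers $\mathcal{A}_n$ is essentially the same event under which the sample and population trajectories of the algorithm agree at every iteration, and hence the stopping times coincide.

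First I would bound $m_{\reach,n}$ in terms of $\peff_n$. Because the sequence $|\mathcal{A}_n^{[m]}|$ is non-increasing in $m$ and is bounded below by $\peff_n = |\mathcal{A}_n|$, Theorem \ref{theorem.correctness.pc.simple.pop} implies that the population algorithm has $\mathcal{A}_n^{[m]} = \mathcal{A}_n$ for all $m \ge m_{\reach,n}$, so the stopping rule $|\mathcal{A}_n^{[m]}| \le m$ is triggered at the latest at $m = \peff_n$. Thus $m_{\reach,n} \le \peff_n = O(n^{1-b})$ by (D3), so assumptions (D4) and (D5) imply (D4$_{m_n}$) and (D5$_{m_n}$) for $m_n = \peff_n$.

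Next, choose $\alpha_n = 2\{1-\Phi(n^{1/2}c_n/2)\}$ as in Lemma \ref{lemma:Gskel} and let $\mathcal{E}_n$ be the event that none of the hypothesis tests of $\rho_n(Y,j \mid \mathcal{S}) = 0$ for $j \in \{1,\ldots,p_n\}$ and $\mathcal{S} \in K_j^{m_n}$ commits a type I or type II error. Repeating the union-bound computation in (\ref{err-pc})-(\ref{ADD3}) from the proof of Lemma \ref{lemma:Gskel} yields
\begin{equation*}
  \PP(\mathcal{E}_n^c) = O\{\exp(-Cn^{1-2d})\}.
\end{equation*}

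The key step is to argue by induction on $m$ that, on the event $\mathcal{E}_n$, the sample trajectory coincides with the population trajectory, i.e., $\widehat{\mathcal{A}}_n^{[m]} = \mathcal{A}_n^{[m]}$ for every $m$ through termination. Indeed, $\widehat{\mathcal{A}}_n^{[m]}$ is built from $\widehat{\mathcal{A}}_n^{[m-1]}$ using only the tests of $\rho_n(Y,j \mid \mathcal{S})=0$ with $\mathcal{S} \subseteq \widehat{\mathcal{A}}_n^{[m-1]} \setminus \{j\}$ and $|\mathcal{S}|=m-1$; if $\widehat{\mathcal{A}}_n^{[m-1]} = \mathcal{A}_n^{[m-1]}$ and all these tests are resolved correctly (which holds on $\mathcal{E}_n$, since all cardinalities involved are at most $m_{\reach,n} - 1 \le m_n$), then $\widehat{\mathcal{A}}_n^{[m]} = \mathcal{A}_n^{[m]}$. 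Consequently, the two stopping rules $|\widehat{\mathcal{A}}_n^{[m]}| \le m$ and $|\mathcal{A}_n^{[m]}| \le m$ trigger at the same iteration, so $\hat m_{\reach,n} = m_{\reach,n}$ on $\mathcal{E}_n$, and the conclusion follows from the bound on $\PP(\mathcal{E}_n^c)$.

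The main (and fairly mild) obstacle is the inductive bookkeeping to confirm that only tests with $|\mathcal{S}| \le m_{\reach,n} \le m_n$ are ever consulted before stopping, so that the error control from Lemma \ref{lemma:3a} applies uniformly to all relevant tests. Once $m_{\reach,n} \le \peff_n$ is established this is immediate, and the remainder of the argument is a routine translation of Lemma \ref{lemma:Gskel} from ``correct selection'' to ``correct stopping time''.
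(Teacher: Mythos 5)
Your strategy is the same as the paper's: define the event that no type~I or type~II error occurs among the tests with conditioning sets of size at most $m_n$, bound its complement by the union-bound computation from the proof of Lemma~\ref{lemma:Gskel}, and argue that on this event the sample algorithm reproduces the population trajectory iteration by iteration, so that the two stopping rules fire at the same $m$. The paper compresses this into two sentences; your induction on $m$ is the honest version of the same argument and is sound, including the bookkeeping point that the sample algorithm, once it agrees with the population version through iteration $m_{\reach,n}$, stops there and never consults a conditioning set larger than those covered by the error control.

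The one genuine weak spot is your justification of $m_{\reach,n}\le\peff_n$. The facts you cite --- $|{\cal A}_n^{[m]}|$ non-increasing, bounded below by $\peff_n$, and equal to $\peff_n$ at termination --- do not imply that the rule $|{\cal A}_n^{[m]}|\le m$ triggers by $m=\peff_n$. For instance, with $p=2$, ${\cal A}=\{1\}$ and $X^{(1)},X^{(2)}$ correlated so that $\Cor(Y,X^{(2)})\neq 0$, one gets $|{\cal A}^{[1]}|=2>1=\peff$ and $m_{\reach}=2=\peff+1$. The correct argument is structural: if the algorithm reaches iteration $m=\peff_n+1$, then for every $j\in{\cal A}_n^{[\peff_n]}\setminus{\cal A}_n$ the set ${\cal S}={\cal A}_n$ is an admissible conditioning set (it has cardinality $\peff_n=m-1$ and is contained in ${\cal A}_n^{[\peff_n]}\setminus\{j\}$), and $\rho(Y,X^{(j)}\mid X^{({\cal A}_n)})=0$ for such $j$, so ${\cal A}_n^{[\peff_n+1]}={\cal A}_n$ and the algorithm stops. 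This yields $m_{\reach,n}\le\peff_n+1$ rather than $\le\peff_n$ (the paper's own unproved assertion carries the same off-by-one), but nothing is lost: the conditioning sets encountered through iteration $m_{\reach,n}$ have cardinality at most $m_{\reach,n}-1\le\peff_n$, so (D4) and (D5) still cover every test performed, and $m_{\reach,n}=O(n^{1-b})$ still holds. With that sub-step repaired, your proof is complete and matches the paper's.
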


\begin{proof}
Consider the population version of the PC-simple algorithm, with stopping
level $m_{\reach}$ as defined in (\ref{reach}). Note that $m_{\reach} = m_{\reach,n}=O(n^{1-b})$ under assumption (D3).
The sample PC-simple($m_n$) algorithm
with stopping level in the range of $m_n \ge m_{\reach}\ (m_n = O(n^{1-b}))$,
coincides
with the population version on a set $A$ having probability $P[A] = 1 -
O\{\exp(-Cn^{1-2d})\}$, see the last formula in the proof of Lemma
\ref{lemma:Gskel}. Hence, on the set $A$, $\hat{m}_{\reach,n} =
m_{\reach}$.
\end{proof}


Lemma \ref{lemma:Gskel} with $m_n = \peff_n$ together with Lemma
\ref{lemma:mreach}, and using that $m_{\reach,n} \le \peff_n$,
complete the proof of Theorem \ref{theorem.cons.pc.simple}.
\end{proof}

\begin{proof}[of Theorem \ref{theorem.cons.corr.screening}]
   By definition, ${\cal A}_{n} \subseteq {\cal A}^{[1]}$ for the
   population version. Denote by $Z_n(Y,j)$ the quantity as in
   (\ref{ztrans}) with ${\cal S} =
   \emptyset$ and by $z_n(Y,j)$ its population analogue, i.e., the
   $Z$-transformed population correlation.
   An error occurs when screening the $j$th variable if $Z_n(Y,j)$ has
   been tested to be zero but in fact $z_n(Y,j) \neq 0$. We denote such an
   error event by $E^{II}_j$. Note that
   \begin{eqnarray*}
      \sup_{j=1,\ldots ,p_n} \PP(E^{II}_j) \le O(n) \exp(-C_1 n c_n^2),
   \end{eqnarray*}
   for some $C_1 >0$, see formula (\ref{ADD3}) above. We do not use any
   sparsity assumption for this derivation, but we do invoke (E1) which
   requires a lower bound on non-zero marginal correlations. Thus, the
   probability of an error occurring in the correlation screening procedure
   is bounded: for some $C_2 >0$,
   \begin{eqnarray*}
      \PP\big(\cup_{j=1,\ldots ,p_n} E^{II}_j\big) &=& O(p_n n) \exp(-C_1 n c_n^2) =
      O[\exp\{(1+a) \log(n) - C_1 n^{1 - 2d}\}]\\
      &=& O\{\exp(-C_2 n^{1 - 2d})\}.
   \end{eqnarray*}
\end{proof}

\bibliographystyle{biometrika}
\bibliography{faithful}

\end{document}